\newcommand{\states}{Q}
\newcommand{\gstates}{S}
\newcommand{\ai}{A^{I}}
\newcommand{\ao}{A^{O}}
\newcommand{\edge}{E}
\newcommand{\trans}{\delta}
\newcommand{\act}{A}
\newcommand{\post}{{\mathsf{post}}}
\newcommand{\asim}{\succeq}
\newcommand{\game}{H}
\newcommand{\weight}{\omega}
\newcommand{\model}{M}
\newcommand{\straa}{\pi^1}
\newcommand{\strab}{\pi^2}
\newcommand{\strai}{\pi^i}
\newcommand{\Straa}{\Pi^1}
\newcommand{\Strab}{\Pi^2}
\newcommand{\Strai}{\Pi^i}
\newcommand{\plays}{\Omega}
\newcommand{\play}{\rho}
\newcommand{\Disc}{Disc}
\newcommand{\Reach}{Reach}
\newcommand{\LimAvg}{LimAvg}
\newcommand{\val}{\nu}
\newcommand{\comments}[1]{}
\newcommand{\mypara}[1]{\noindent{\bf #1}}
\newtheorem{theorem}{Theorem}
\newtheorem{lemma}[theorem]{Lemma}
\newtheorem{remark}[theorem]{Remark}
\title{Interface Simulation Distances\footnote{This research was
    partially supported  
by the European Research Council (ERC) Advanced Investigator Grant
QUAREM, the Austrian Science Fund (FWF) projects S11402-N23 and
S11407-N23 (RiSE), 
the Austrian Science Fund (FWF) Grant No P 23499-N23, 
ERC Start grant (279307: Graph Games) and Microsoft 
faculty fellows award. 
}}
\author{Pavol {\v C}ern\'y \qquad Martin Chmel\'{i}k \qquad Thomas
  A. Henzinger \qquad Arjun Radhakrishna 
\institute{IST Austria}
}
\begin{document}

\maketitle

\begin{abstract}
The classical (boolean) notion of refinement for behavioral 
interfaces of system components is the alternating refinement
preorder.  
In this paper, we define a distance for %quantitative measure for
interfaces, called {\em interface simulation distance}. It makes the
alternating refinement preorder quantitative by, intuitively,
tolerating errors (while counting them) in the alternating 
simulation game. 
%The model of behavioral interfaces we consider is a variant of interface
%automata. 
We show that the interface simulation distance satisfies the triangle
inequality, that the 
distance between two interfaces does not increase under parallel composition
with a third interface, and that the distance between two interfaces can
be bounded from above and below by distances between abstractions of
the two interfaces. We illustrate the framework, and the properties of
the distances under composition of interfaces, with 
two case studies. % of a simple message transmission protocol and of error
%correcting codes. 
\end{abstract}

\section{Introduction}
The component-based approach is an important design principle in software and
systems engineering. In order to document, specify,
validate, or 
verify components, various formalisms that capture behavioral aspects
of component interfaces have been
proposed~\cite{AH01,Harel87,Jackson00,YS97}. These formalisms capture 
assumptions on the inputs and their order, and guarantees on the
outputs and their order. For closed systems (which do not interact
with the environment via inputs or outputs), a natural notion of
refinement is given by 
the simulation preorder. For open systems, which expect inputs and
provide outputs, the corresponding notion is given by the alternating
simulation preorder~\cite{AHKV98}. Under alternating simulation, an interface A~is
refined by an interface B if, after any given sequence of inputs and
outputs, B accepts all inputs that A~accepts, and B provides only
outputs that A provides. 
The alternating simulation preorder is a boolean notion. Interface A~
either is refined by interface B, or it is 
not. However, there are various reasons for which the alternating
simulation can fail, and one can make quantitative
distinctions between these reasons. For
instance, if B does not accept an input that A accepts (or provides
an output that A does not provide) at every step, then B is more
different from A 
than an interface that makes a mistake once, or at least not
as often as B. 

We propose an
extension of the alternating simulation to the quantitative setting.  
We build on the notion of simulation distances introduced in~\cite{CHR10a}. 
Consider the definition of alternating simulation of an interface A by
an interface B as a two-player game. In this game, Player~1 chooses
moves (transitions), and Player~2 tries to match them. Player~1
chooses input transitions from the interface A and output transitions
from interface B, Player~2 responds by a transition from the other
system. The goal of Player~1 is to prove that the alternating 
simulation does not hold, by driving the 
game into a state from which Player~2 cannot match the chosen move; the goal
of Player~2 is to prove that there exists an alternating simulation,
by playing the game forever. We extend this definition 
to the quantitative case. Informally, we will tolerate errors by
Player~2. However, Player~2 will pay a 
certain price for such errors. 
More precisely, Player~2 is allowed to ``cheat'' by
following a non-existing transition. The price for such transition is
given by an {\em error model}. The error model assigns the transitions from the
original system a weight 0, and assigns the new ``cheating''
transitions a positive weight. 
The goal of Player~1 is then to
maximize the cost of the game, and the goal of 
Player~2 is to minimize it. The cost is given by an objective
function, such as the limit average of transition prices. 
As Player~2 is trying to minimize the value of the game,
she is motivated not to cheat. The value of the game measures how
often Player~2 can be forced to cheat by Player~1. 

\begin{figure}[t]
    \centering
   \resizebox{14cm}{!}{\includegraphics{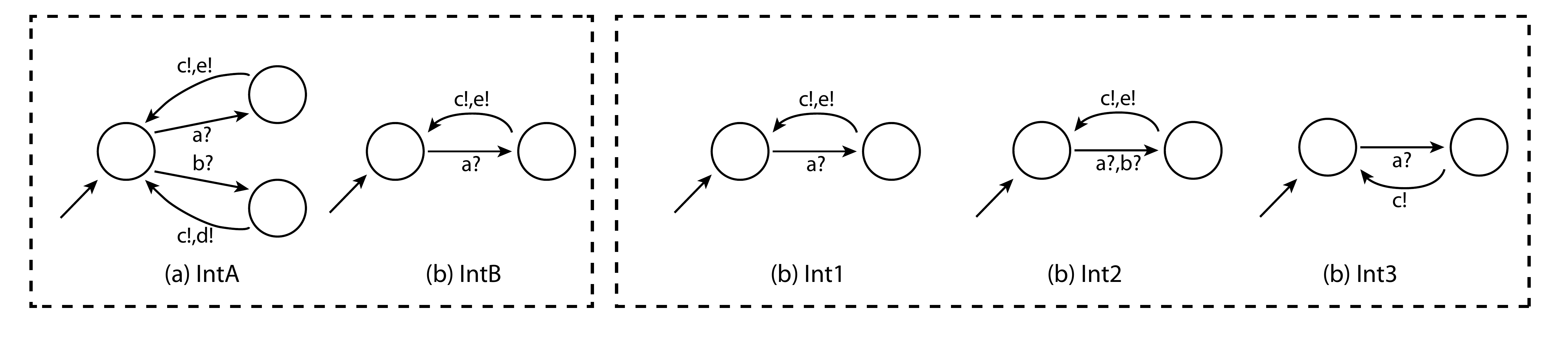}}
    \caption{Example 1}
    \label{fig:introduction}
    \vspace{-0.5cm}
\end{figure}

Consider the example in Figure~\ref{fig:introduction}. The two
interfaces on the left side (IntA and IntB) represent requirements on
a particular component by a designer. The three interfaces (Int1,
Int2, 
and Int3) on the right side are interfaces for different off-the-shelf
components provided by a vendor. We illustrate how interface
simulation distances can be used by the designer to choose a component
whose interface satisfies her requirements most closely. Interface
Int1 is precisely the interface required by IntB, so the distance from
IntB to Int1 will be $0$. However, the distance from IntA to Int1 is
much greater. Informally, this is because Player~1,
choosing a transition of IntA could choose the {\em b?} input. Player~2,
responding by a transition of Int1 has to cheat by playing the {\em
  a?} input. After that, Player~1 could choose the {\em e!} output (as a
transition of Int1), and 
Player~2 (this time choosing a transition from IntA) has to cheat
again. Player~2 thus has to cheat at every step. Interfaces Int2
(resp. Int3) improve on Int1 (with respect to requirement IntA), by
adding inputs (resp. removing outputs). The distance from IntA to Int2 (Int3)
 is exactly half of the distance from IntA to Int1.  
The interfaces Int2 and Int3 have distance 0 to
IntB. Int2 and Int3 satisfy the requirements IntA and IntB better than
the interface Int1.  

The model of behavioral interfaces we consider is a variant of interface
automata~\cite{AH01}. This choice was made for ease of presentation of
the main ideas of the paper. However, the definition of interface
simulation distance can be extended to richer models. 

We establish basic properties of the interface simulation distance. 
First, we show that the triangle inequality holds for the interface
simulation distance. This, together with the fact that reflexivity
holds for this distance as well, shows that it is a {\em directed
  metric}~\cite{dAMRS08}. Second, we give an 
algorithm for calculating the distance. 
The interface simulation distance can be calculated by
solving the value problem in the corresponding game, that is, in
limit-average games or discounted-sum games. 
The values of such games can be computed in pseudo-polynomial
time~\cite{ZP96}. (More precisely, the complexity depends on the
magnitude of the largest weight used in the error model. Thus the
running time is exponential in the size of the input, if the weights
are given in binary.) 

We present composition and abstraction techniques that are useful for
computing and approximating simulation distances between large
systems. These properties suggest that the interface simulation
distance provides an appropriate basis for a quantitative analysis of
interfaces. 
%
%\mypara{Composition of interface automata and interface simulation
%  distances.} 
The composition of interface automata, which also composes the
assumptions on their environments, was defined in~\cite{AH01}.   
%Interface automata encode certain assumptions on the
%environment, such as the order in which the inputs are provided. These
%assumptions are encoded by the fact that interface automata do not
%accept all possible  (sequences of) inputs. Interface automata, when
%composed, synchronize using the input and output actions. Under
%composition, the assumptions on the environment are combined as
%follows: when one automaton produces an output which is an illegal
%input for the other automaton, this leads to an error state. 
In this paper, we prove that the distance between two interfaces does
not increase under the 
composition with a third interface. The technical challenges in the
proof appear precisely because of the involved definition of composition
of interface automata, and are not present in the simpler setting
closed systems of~\cite{CHR10a}. We also show that the distance between
two interfaces can be over- or under- approximated by distances between
abstractions of the two interfaces. For instance, for
over-approximation, input transitions are abstracted universally, and
output transitions are abstracted existentially.  

We illustrate the interface simulation 
distance, and in particular its behavior under interface composition,
on two case studies. The first concerns a simple message transmission
protocol over an unreliable medium. The second case study models
error correcting codes. 

Summarizing, this paper defines the interface simulation distance for
automata with inputs and outputs, establishes basic properties of
this distance, as well as abstraction and compositionality theorems.   

\mypara{Related work.}
The alternating simulation preorder was defined in~\cite{AHKV98} in order 
to generalize the simulation preorder to input/output systems. The alternating 
simulation can be checked in polynomial time, as is the case for
the ordinary simulation 
relation. Interface automata have been defined in~\cite{AH01} to
facilitate component-based design, and the
theory was developed further, e.g.,
in~\cite{DHJP08,AHS02,CAHS03,larsen_modal_2007}. The 
natural notion of 
refinement for interface automata corresponds to the alternating
simulation preorder. 
Simulation distances have been proposed in~\cite{CHR10a} (the full
version was published recently in~\cite{CHR12}) as a step towards
extending specification formalisms and verification algorithms to a
quantitative setting.  
This paper extends the quantitative notion of simulation distances to the
alternating simulation preorder for interface automata. 

There have been several attempts
to give mathematical  
semantics to reactive processes based on quantitative metrics
rather than boolean preorders~\cite{vanBreugel01,AFS09}. In
particular for probabilistic processes, it is natural to generalize
bisimulation relations to bisimulation metrics~\cite{DGJP04}, and
similar generalizations can 
be pursued if quantities enter through continuous variables, such as
time~\cite{CB02}. 
In contrast, we consider distances between purely discrete
(non-probabilistic, untimed) systems. 
\vspace{-0.6em}
\section{Interface Simulation Distances}
\vspace{-0.3em}

\subsection{Broadcast Interface Automata}

Interface automata were introduced in~\cite{AH01} to
model components of a system communicating through interfaces.
We use a variant of interface automata which we call {\em broadcast
interface automata} (BIA). 

%\begin{definition}(Broadcast Interface Automata)
A {\em broadcast interface automaton} $F$ is a tuple $(\states,q^{0},
\ai, \ao, \trans)$ consisting of a finite set of states $\states$, the
initial state $q^{0}$, two disjoint sets $\ai$ and $\ao$ of input and output
actions and a set $\trans \subseteq \states \times \act \times \states$ of
transitions.
We let $\act = \ai \cup \ao$.
Additionally, we require that $F$ is input deterministic, i.e.,
for all $q,q',q'' \in \states$ and all $\sigma_I \in \ai$ if there are
transitions 
$(q,\sigma_I,q')$ and $(q,\sigma_I,q'') \in \trans$, then $q' = q''$.
%\end{definition}

Given a state $q \in \states$ and an action $\sigma \in \act$ let
$\post(q,\sigma) 
= \{ q' \mid (q,\sigma,q') \in \trans \}$. Similarly given a state $q
\in \states$ let $\ai(q)$ be the input actions enabled at state~$q$
($\ao(q)$ for output actions).
Note that the \emph{BIA} is not required to be input-enabled, hence there
may be states $q$ where $\ai(q) \not = \ai$. 

An example of a \emph{BIA} can be seen on
Figure~\ref{fig:introduction}. The actions terminated by the~$?(!)$
symbol are input (output) actions, respectively. The \emph{BIA}
\emph{IntA} can input $a?$ or $b?$. Depending on the input it can
output $c!$ or $e!$ ($c!$ or $d!$, respectively), and this repeats
forever. 

There are two differences between standard interface automata and BIAs . First, the
communication paradigm in interface automata is pairwise, i.e., an output from a component can
serve as the input to only one other component. 
However, in BIAs the communication model is {\em broadcast}, i.e., an
output from a component can serve as input for multiple different
components. Second, standard interface automata have hidden (internal)
actions, which are omitted from the definition of BIAs. These
modifications were introduced in order to simplify the presentation of
the interface simulation distance, and to enable us to clearly express
the principal ideas. The distance can be
defined for richer models of automata with inputs and outputs,
including for standard interface automata. 

%\begin{definition}
\smallskip\noindent{\bf Alternating Simulation.} Given two {BIAs} $F$
and $F'$, a binary relation on states $\asim \subseteq \states_{F}
\times \states_{F'}$ is an alternating simulation by $F$ of $F'$ if $q
\asim q'$ implies:
\begin{compactenum}
\item for all $\sigma_I \in \ai(q)$ and $r \in \post(q,\sigma_I)$,
  there exists a 
state $r' \in \post(q',\sigma_I)$ such that $r \asim r'$;
\item for all $\sigma_O \in \ao(q')$ and $r' \in \post(q',\sigma_O)$,
there exists a state $r \in \post(q,\sigma_O)$ such that $r \asim r'$. 
\end{compactenum}
%\end{definition}
%
%As in the simulation case, being replacable when considering alternating
%simulation is expressed by the following definition.
%
%\begin{definition}
A BIA $F'$ {\em refines} a BIA $F$ (written $F \asim F'$) if
\begin{compactenum}
\item $\ai_{F} \subseteq \ai_{F'}$ and $\ao_{F} \supseteq \ao_{F'}$;
\item there exists an alternating simulation $\asim$ by $F$ of $F'$ such
that $q^{0}_{F} \asim q^{0}_{F'}$.
\end{compactenum}
%\end{definition}

The intuition behind the above definitions is that when $F \asim F'$,
the component $F$ in a system can be replaced with component $F'$
without leading to any erroneous behavior.

Consider the \emph{BIAs} $\emph{IntB}$ and $\emph{Int2}$ in
Figure~\ref{fig:introduction}. Note that \emph{Int2} refines
$\emph{IntB}$, i.e., \emph{IntB} $\asim$ \emph{Int2}. One can
easily observe that the converse is not true.

\smallskip\noindent{\bf Composition of BIAs.}
When composing BIAs it is required for the inputs (outputs) of
the two automata not to mix, i.e., two BIAs $F$ and $G$ are {\em
composable} if $\ai_F \cap \ai_G = \emptyset$ and $\ao_{F} \cap \ao_{G}
= \emptyset$.
%\end{compactenum}
%Note that the sets $\ai_F \cap \ao_G$ and $\ao_F \cap \ai_G$ are not
%required to be empty.
For two composable \emph{BIAs} $F$ and $G$ we let $\emph{shared} (F,G) =
A_{F} \cap A_{G}$.

Whenever there is an action $\sigma \in \emph{shared} (F,G)$ the
composed system makes a joint transition and the output action remains
visible. 
Finally, the {\em composition} of two composable
\emph{BIAs} $F = (\states_F,q^{0}_F, \ai_F, \ao_F, \trans_F)$ and $G =
(\states_G,q^{0}_G, \ai_G, \ao_G, \trans_G)$ is a BIA $F \otimes G =
(\states_{F \otimes G},q^{0}_{F \otimes G}, \ai_{F \otimes G}, \ao_{F
\otimes G}, \trans_{F \otimes G})$ where the states of the product
$\states_{F \otimes G}$ are  $\states_{F} \times \states_{G}$, with the
initial state $q^{0}_{F \otimes G} = (q^0_F,q^0_G)$. The product
input(output) alphabet is $\ai_{F \otimes G} = \ai_{F} \cup \ai_{G}
\setminus \emph{shared}(F,G) \;  (\ao_{F \otimes G} = \ao_{F} \cup
\ao_{G})$, respectively. The transition relation $\trans_{F \otimes
  G}$ contains the 
transition $((q,r),\sigma,(q',r'))$ iff
\begin{compactitem}
\item $\sigma \not \in \emph{shared}(F,G)$ and $(q,\sigma,q') \in
  \trans_{F}$ and 
$r = r'$, or
\item $\sigma \not \in \emph{shared}(F,G)$ and $(r,\sigma,r') \in
  \trans_{G}$ and 
$q = q'$, or
\item $\sigma \in \emph{shared}(F,G)$ and $(q,\sigma,q') \in \trans_{F}$ and
$(r,\sigma,r') \in \trans_{G}$.
\end{compactitem}

Given two composable \emph{BIAs} $F$ and $G$, a product state
$(p,q)$ is an \emph{error} state of the product automaton $F \otimes G$
if there exists a shared action $\sigma \in \emph{shared}(F,G)$ such
that $\sigma \in 
\ao_{F}(p)$ and $\sigma \not \in \ai_{G}(q)$ or  $\sigma \in
\ao_{G}(q)$ and $\sigma 
\not \in \ai_{F}(p)$. A state $(p,q)$ of the product automaton is
\emph{compatible} if no error state is reachable
 from the state $(p,q)$ using only output actions.
\begin{wrapfigure}[9]{}{8cm}
   \resizebox{8cm}{!}{\includegraphics{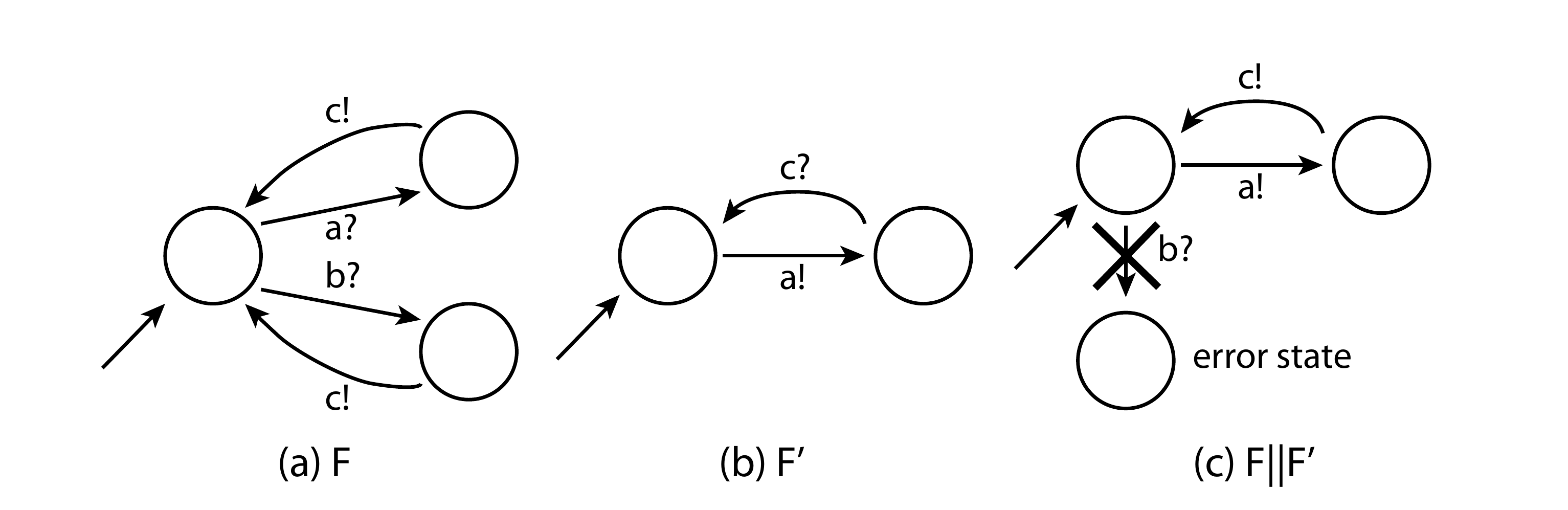}}
    \caption{Composition of \emph{BIAs}}
    \label{fig:composition}
\end{wrapfigure}
A state that is not compatible is
\emph{incompatible}. Two \emph{BIAs} $F$
and $G$ are \emph{compatible} iff the initial state of their product
automaton $F \otimes G$ is 
compatible (denoted by $F \sim G$). The product of two compatible
automata $F$ and $G$ restricted to compatible states is denoted by $F
\parallel G$ and is obtained from $F \otimes G$ by removing input action
transitions that lead from compatible to incompatible states.

A composition of \emph{BIAs} $F$ and $F'$ and the composed interface
$F \parallel F'$ restricted to compatible states can be seen on
Figure~\ref{fig:composition}. Actions $a$ and $c$ become shared
actions in the composition and the composed interface makes a joint
transition on these actions. Note that when constructing the product
$F \otimes F'$ an error state is reachable, therefore the input
transition $(b?)$ leading from a compatible to an incompatible state
is removed from the product $F \parallel F'$. 
\subsection{Graph Games}
\vspace{-0.3em}

In this section, we introduce concepts from the theory of $2$-player
graph games that are necessary for the exposition.
A {\em game graph} is a tuple $\game = (\gstates, \gstates_{1},
\gstates_{2}, \edge, s_\iota)$, where $\gstates$ is a finite set of
states, $\edge \subseteq \gstates \times \gstates$ is a set
of edges, $s_\iota \in \gstates$ is an initial state, and $\gstates_1$
and $\gstates_2$ partition the state space $\gstates$ into Player~1
and Player~2 states respectively.
The game proceeds as follows: First, a token is placed on the initial
state $s_\iota$. Now, whenever the token is on a state $s \in S_i, i \in
\{1,2\}$ Player~$i$ picks a successor $s'$ of $s$ and the token is moved
to the state $s'$, and the process continues infinitely.
The result $\rho = \rho_0\rho_1\ldots$ of an infinite sequence of
visited states is called a {\em play}.  The set of all plays is denoted
by $\plays$.

\noindent{\bf Strategies.}
A {\em strategy for Player~$i$} is a recipe for Player~$i$ to choose the
next transition. Formally, a Player~$i$ strategy $\strai : \gstates^*
\cdot \gstates_i \to \gstates$ is a function
such that for all $w \in \gstates^*$ and $s \in \gstates_i$, we have $(s,
\straa(w \cdot s)) \in \edge$.
We denote by $\Strai$, the set of all Player~$i$ strategies.
The string $w$ is called the {\em history} of
the play and $s$ is the called the {\em last state} of the play.

We define two restricted notions of strategies that are sufficient in
many cases. A strategy is:
\begin{compactitem}
\item {\em Positional} or {\em memoryless} if the chosen successor is
independent of the history, i.e., for all $w \in \gstates^*$,
$\strai(w \cdot s) = \strai(s)$.
\item {\em Finite-memory} if there exists a finite memory set $M$ and an
initial memory state $m_0 \in M$, a memory function $\mu : \gstates^*
\times M \to M$, and a move function $\nu : M \times \gstates_i \to
\gstates$ such that: 
%\begin{compactitem}
(a) $\mu(\epsilon, m_0) = m_0$ and $\mu(w\cdot s, m_0) = \mu(s, \mu(w,
m_0))$; and
(b) $\strai(w\cdot s) = \nu(\mu(w, m_0),s)$.
%\end{compactitem}
Intuitively,
(a)~the state of the memory
is updated based only upon the previous state of the memory and the last
state of the play; and
(b)~the chosen successor depends only on the state of the
memory and the last state of the play.
\end{compactitem}

A play $\rho = \rho_0\rho_1\ldots$ is {\em conformant}
to a Player~$i$ strategy $\pi^i$ if for every $\rho_j \in \gstates_i$,
we have $\pi^i(\rho_0\ldots\rho_j) = \rho_{j+1}$.
Given a game graph $\game$ and strategies $\straa$ and $\strab$ for Player~1
and Player~2 respectively, we get a unique path
$Out_{H}(\straa,\strab)$ that is conformant to both of the
strategies.

\noindent{\bf Objectives.}
A {\em boolean objective} $\Phi \subseteq \plays$ denotes that Player~1
wins if the resultant play $\play$ is in $\Phi$, and that Player~2 wins
otherwise.
A Player~$i$ {\em winning strategy} is one for which all plays
conformant to it are winning for Player~$i$.
We deal with only the {\em reachability} boolean objective. Given a set
of target states $T \subseteq \gstates$ and a play $\play =
\rho_0\rho_1\ldots$, $\play \in \Reach_T$ if and only if
$\exists i : \rho_i \in T$.

A {\em quantitative objective} is a real--valued function $f: \plays
\rightarrow \mathbb{R}$ and the goal of Player~1 is to maximize the
value of the play, whereas the goal of Player~2 is to minimize it. 
We consider the following quantitative objectives:
Given a weight function $\weight : \edge \to \mathbb{R}$, we have
\begin{compactitem}
\item LimAvg$(\rho) = \liminf\limits_{n \rightarrow \infty} \frac{1}{n} \cdot
\sum\limits^{n-1}_{i=0} \weight((\rho_i, \rho_{i+1}))$
\item $\Disc_\lambda(\rho) = \lim\limits_{n \rightarrow \infty}
\sum\limits^{n-1}_{i=0} \lambda^i \cdot \weight((\rho_i, \rho_{i+1}))$
\end{compactitem}

Given a quantitative objective $f$ and a Player~1 strategy $\straa$, the
\emph{value of strategy $\straa$}, denoted by $\val_1(\straa \in \Straa)$ is
$\inf_{\strab \in \Strab} f(Out_H(\straa, \strab))$. Similarly, value
$\val_2(\strab)$ of a Player~2 strategy $\strab$ is $\sup_{\straa \in
\Straa}
f(Out_H(\straa, \strab))$.
The {\em value of the game} is defined as $\sup_{\straa \in \Straa} \val_1(\straa)$
or equivalently, $\sup_{\straa \in \Straa} \inf_{\strab \in \Strab}
\val(Out_H(\straa, \strab))$.
A strategy is \emph{optimal} if its value is equal to the value of the
game.
We conclude this section by stating the memoryless-determinacy theorems
for $\LimAvg$ and $\Disc$ objectives (see e.g.\cite{ZP96}).
\begin{theorem}
For any game graph $\game$ and a weight function $\weight$, we have that
$\sup_{\straa \in \Straa} \inf_{\strab \in \Strab} f(Out_H(\straa,
\strab)) = \inf_{\strab \in \Strab} \sup_{\straa \in \Straa}
f(Out_H(\straa, \strab))$ for $f \in \{ \LimAvg, \Disc \}$. Furthermore,
there exist memoryless optimal strategies for both players.
\end{theorem}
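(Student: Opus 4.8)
The plan is to reduce everything to the existence of memoryless optimal strategies for both players, since determinacy then follows immediately. Indeed, suppose Player~1 has a memoryless strategy $\straa^*$ with $\val_1(\straa^*) \ge c$ and Player~2 has a memoryless strategy $\strab^*$ with $\val_2(\strab^*) \le c$. Then $\sup_{\straa}\inf_{\strab} f \ge \val_1(\straa^*) \ge c \ge \val_2(\strab^*) \ge \inf_{\strab}\sup_{\straa} f$, while the reverse inequality $\sup_{\straa}\inf_{\strab} f \le \inf_{\strab}\sup_{\straa} f$ holds trivially (max--min inequality); hence both quantities equal $c$ and the witnessing strategies are memoryless and optimal. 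So it suffices to produce such $\straa^*,\strab^*$ sharing a common guarantee $c$, and I would treat the two objectives separately.

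For $f = \Disc_\lambda$ with $\lambda \in (0,1)$ I would use a Bellman/contraction argument. Define the optimality operator $T : \mathbb{R}^{\gstates} \to \mathbb{R}^{\gstates}$ by $(Tg)(s) = \max_{(s,s')\in\edge}\big(\weight((s,s')) + \lambda\, g(s')\big)$ for $s \in \gstates_1$, and by the analogous $\min$ for $s \in \gstates_2$. Since $\lambda < 1$, the operator $T$ is a $\lambda$-contraction in the sup-norm, so by the Banach fixed-point theorem it has a unique fixed point $g^*$. Selecting a maximizing (resp. minimizing) successor at each Player~1 (resp. Player~2) state yields memoryless $\straa^*,\strab^*$. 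Unrolling the discounted sum and repeatedly applying the fixed-point equation shows that fixing $\straa^*$ guarantees payoff at least $g^*(s_\iota)$ against every Player~2 strategy, and symmetrically that $\strab^*$ caps it at $g^*(s_\iota)$; thus $c = g^*(s_\iota)$ settles the discounted case.

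For $f = \LimAvg$ I would transfer the discounted result by letting $\lambda \to 1$. There are only finitely many memoryless strategy pairs, so the assignment $\lambda \mapsto (\straa^*_\lambda,\strab^*_\lambda)$ of discounted-optimal positional strategies takes finitely many values; hence there is a sequence $\lambda_n \uparrow 1$ along which a single pair $(\straa^*,\strab^*)$ is discounted-optimal for every $\lambda_n$. I would then relate the normalized discounted value $(1-\lambda)\,g^*_\lambda(s)$ to the mean-payoff value via an Abelian/Tauberian argument: fixing a memoryless strategy reduces the game to a one-player optimization on a finite graph, whose best response is realized by a minimum/maximum mean cycle, so the induced play is eventually periodic and its $\liminf$ coincides with the cycle mean. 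This identifies $\lim_n (1-\lambda_n)\,g^*_{\lambda_n}(s_\iota)$ with a common mean-payoff guarantee $c$ and shows $\straa^*$ secures $\LimAvg \ge c$ against every (not merely positional) Player~2 strategy, with the symmetric statement for $\strab^*$.

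The hard part will be the limit-average case, specifically making the passage $\lambda \to 1$ rigorous: one must show that the positional strategies optimal for discounted games with $\lambda$ close to $1$ remain optimal for the mean-payoff objective, and that the $\liminf$ in the definition of $\LimAvg$ causes no loss. The key technical ingredient is that against a fixed positional strategy the residual single-player game is solved by optimal mean cycles, so every relevant play is eventually periodic and its $\liminf$ equals its long-run average; combined with the finiteness/subsequence argument and the Tauberian link between discounted and average payoffs, this closes the gap. By contrast, the discounted case is essentially a routine contraction-mapping verification.
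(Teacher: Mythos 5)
The first thing to note is that the paper does not prove this theorem at all: it is quoted as a known memoryless-determinacy result with a pointer to \cite{ZP96}, so there is no internal proof to compare against. Your argument is the classical proof from the literature that such citations ultimately rest on: Shapley's contraction/Bellman-operator argument for the discounted case, followed by a vanishing-discount (Blackwell-optimality style) transfer to the mean-payoff case. The overall structure is sound. The opening reduction is right: exhibiting memoryless $\straa^*, \strab^*$ with a common guarantee $c$ yields determinacy via the trivial max--min inequality. The $\Disc_\lambda$ part is indeed a routine contraction argument (note that since the paper's $\Disc_\lambda$ carries no $(1-\lambda)$ normalization, you correctly reintroduce that factor later when taking $\lambda \to 1$). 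The pigeonhole over the finitely many positional strategy pairs, giving one pair $(\straa^*,\strab^*)$ that is discounted-optimal along some $\lambda_n \uparrow 1$, is also the standard and correct device.

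One step, however, is stated in a way that would fail as written: you say that against a fixed positional strategy ``every relevant play is eventually periodic.'' That holds only when the opponent also plays positionally; an arbitrary (infinite-memory) opponent can generate an aperiodic play, and Player~1's guarantee must hold against all of them. The repair is the cycle-decomposition bound: fix $\straa^*$ and consider the finite residual graph; let $m$ be its minimum cycle mean. Every prefix of every play in this graph decomposes into simple cycles plus at most $|\gstates|$ leftover edges, so the running averages are at least $m - O(1/n)$, whence $\LimAvg(\play) \geq m$ for \emph{every} play, periodic or not. The Abelian step then gives $c \leq m$: playing the optimal-cycle response against $\straa^*$ in the discounted games and letting $\lambda_n \to 1$ shows $(1-\lambda_n) g^*_{\lambda_n}(s_\iota) \to c$ cannot exceed the cycle mean $m$. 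The Player~2 side is symmetric with the maximum cycle mean $M$, and the $\liminf$ in the paper's definition of $\LimAvg$ causes no asymmetry problem: Player~1 needs a lower bound on the $\liminf$ (which the cycle decomposition provides), while Player~2 only needs an upper bound on the $\limsup$, which dominates the $\liminf$. With that one step tightened, your proof is complete and is essentially the argument underlying the result the paper cites.
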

\vspace{-0.7em}
\subsection{Interface Simulation Games}
\vspace{-0.3em}
Simulation like relations can be characterized as the existence of
winning strategies in $2$-player games known as simulation games.
Here, we present the analogue of simulation games for alternating
simulation of BIAs.

\noindent{\bf Alternating Simulation Games.}
Intuitively, given BIAs $F$ and $F'$, Player~1 picks either an
input transition from the $F$ or an output transition from $F'$,
and Player~2 has to match with a corresponding transition with the
same action from $F'$ or $F$, respectively.  We have $F \asim F'$
if and only if Player~2 can keep matching the transitions forever.

Given \emph{BIAs} $F =  (\states_{F},q^{0}_{F}, \ai_{F}, \ao_{F},
\trans_{F})$ and $F' =  (\states_{F'},q^{0}_{F'}, \ai_{F'},
\ao_{F'}, \trans_{F'})$, such that $\ai_{F} \subseteq
\ai_{F'}$ and $\ao_{F} \supseteq \ao_{F'}$, the {\em alternating
simulation game} $\game_{F,F'} = (\gstates,\gstates_{1},\gstates_{2},
\edge,s^0)$ is defined as follows:
\begin{compactitem}
\item The state-space $\gstates = \gstates_{1} \cup \gstates_{2}$, where
$\gstates_{1} = \{(s,\#,s') \mid s \in \states_{F},s' \in
\states_{F'}\} \cup \{s_{\mathsf{err}}\}$ and $
\gstates_{2} = \{(s,\sigma,s') \mid s \in \states_{F}, s' \in
\states_{F'}, \sigma \in \Sigma\}$.
\item The initial state is $s^{0} = (q^{0}_{F}, \#, q^{0}_{F'})$;
\item The Player~1 edges correspond to:
\begin{compactitem}
\item Either input transitions from $F$:
$(s,\#,s') \rightarrow (t,\sigma_I,s') \in E 
\Leftrightarrow s \stackrel{\sigma_I}{\rightarrow} t \in \trans_{F}$; or
\item Output transitions from $F'$:
$(s,\#,s') \rightarrow (s,\sigma_O,t') \in E 
\Leftrightarrow s' \stackrel{\sigma_O}{\rightarrow} t' \in \trans_{F'}$.
\end{compactitem}
\item The Player~2 edges correspond to
\begin{compactitem}
\item Either input transitions from $F'$: $(t,\sigma_I,s') \rightarrow
(t,\#,t')\in E \Leftrightarrow s' \stackrel{\sigma_I}{\rightarrow} t' \in \trans_{F'}$; or
\item Output transitions from $F$:
$(s,\sigma_O,t') \rightarrow (t,\#,t') \in E 
\Leftrightarrow s \stackrel{\sigma_O}{\rightarrow} t \in \trans_{F}$.
\end{compactitem}
\item For all states $s \in \gstates_{2}$ if there is no outgoing edge from
$s$ we make an edge $s \rightarrow s_{\mathsf{err}}$; and for all states $s \in
\gstates_{1}$ if there is no outgoing edge from $s$ we make a selfloop
on $s$.
\end{compactitem}
The objective of Player~1 is to reach the state $s_{\mathsf{err}}$ and
the objective of Player~2 is to avoid reaching $s_{\mathsf{err}}$.

We have the following theorem.
\begin{theorem}
Given BIAs $F$ and $F'$ and the corresponding alternating simulation
game $\game_{F, F'}$, we have that $F \asim F'$ if and only if Player~2
has a winning strategy in $\game_{F, F'}$.
\end{theorem}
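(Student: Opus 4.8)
The plan is to prove the two implications separately, each time translating between the relational description of alternating simulation and the region description of the safety game that Player~2 faces: since Player~2's objective is to avoid $s_{\mathsf{err}}$ forever, it is a safety objective, and I may use the controllable-predecessor characterization of her winning region $W$ (the set of states from which Player~2 can avoid $s_{\mathsf{err}}$). This region satisfies two closure properties that I will use repeatedly: every state of $\gstates_1$ in $W$ has \emph{all} its successors in $W$, and every state of $\gstates_2$ in $W$ has \emph{at least one} successor in $W$. Both are standard for reachability/safety games, which are determined with memoryless strategies.

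For the direction $F \asim F' \Rightarrow$ Player~2 wins, I would fix an alternating simulation $\asim$ with $q^{0}_{F} \asim q^{0}_{F'}$ and have Player~2 play so as to maintain the invariant that every Player~1 state $(s,\#,s')$ visited satisfies $s \asim s'$; this holds at the initial state $s^{0}=(q^{0}_{F},\#,q^{0}_{F'})$. The two shapes of Player~1 move match the two clauses of alternating simulation exactly. An input move of $F$, i.e.\ $s \stackrel{\sigma_I}{\rightarrow} t$ leading to $(t,\sigma_I,s')$, is answered by the first clause together with $\ai_{F}\subseteq\ai_{F'}$: there is an input $s' \stackrel{\sigma_I}{\rightarrow} t'$ of $F'$ with $t \asim t'$, so Player~2 moves to $(t,\#,t')$. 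An output move of $F'$, i.e.\ $s' \stackrel{\sigma_O}{\rightarrow} t'$ leading to $(s,\sigma_O,t')$, is answered by the second clause together with $\ao_{F}\supseteq\ao_{F'}$: there is an output $s \stackrel{\sigma_O}{\rightarrow} t$ of $F$ with $t \asim t'$, so Player~2 moves to $(t,\#,t')$. In both cases the required successor exists and the invariant is re-established, so Player~2 is never forced onto the default edge to $s_{\mathsf{err}}$ and the resulting play avoids $s_{\mathsf{err}}$.

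For the converse I would read a relation off the winning region rather than build a strategy. Define $s \mathrel{R} s'$ iff $(s,\#,s') \in W$. To check that $R$ is an alternating simulation, suppose $s \mathrel{R} s'$. For the first clause, take $\sigma_I \in \ai(s)$ and $r \in \post(s,\sigma_I)$; the Player~1 move to $(r,\sigma_I,s')$ keeps the token in $W$ by the first closure property, and then this Player~2 state, being in $W$, has a successor $(r,\#,t')$ in $W$ with $s' \stackrel{\sigma_I}{\rightarrow} t'$, so $r' = t'$ witnesses $r \mathrel{R} r'$. The second clause is symmetric, using the $F'$-output move to $(s,\sigma_O,r')$ and an $F$-output successor $(t,\#,r')$ in $W$. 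Since $s^{0} \in W$ whenever Player~2 has a winning strategy, we obtain $q^{0}_{F} \mathrel{R} q^{0}_{F'}$, and with the standing alphabet hypotheses this gives $F \asim F'$.

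The step that needs care, and which I view as the crux, is the emptiness issue concealed by the default edge to $s_{\mathsf{err}}$: in the converse I must know that the relevant $\post$ set is nonempty before claiming a matching successor. This is exactly where the second closure property pays off: a Player~2 state whose opposite-automaton $\post$ set were empty would, by construction, have its \emph{only} outgoing edge going to $s_{\mathsf{err}}\notin W$, contradicting its membership in $W$. Discharging this for both the input and the output case is the only non-mechanical part; the remainder is the routine clause-by-clause matching of the two families of game edges against the two conditions defining alternating simulation.
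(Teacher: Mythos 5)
Your proof is correct. The paper states this theorem without any proof (it is presented as the standard game-theoretic characterization of simulation-like relations), and your argument --- building a Player~2 strategy that maintains the invariant $s \asim s'$ in one direction, and reading an alternating simulation off the safety winning region in the other, with the dead-end edge to $s_{\mathsf{err}}$ correctly used to rule out empty $\post$ sets --- is exactly the canonical argument the authors implicitly rely on.
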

\vspace{-0.7em}
\subsection{Quantitative Interface Simulation Games}
\vspace{-0.3em}

We aim to establish a distance function between broadcast interface
automata that expresses how ``compatible" the automata are, even when
the standard boolean notion of refinement is not true. In order to do
that we give more power to Player~2, by allowing him to play actions
that are not originally in the game. However, to avoid free use of such
actions every time Player~2 plays the added action he receives a
penalty. As we do not want Player~2 to play completely arbitrarily we
formalize the allowed ``cheating" by a notion of input (output) error
models.

An {\em input (output) error model} is a function $M: \ai \times \ai \rightarrow
\mathbb{N} \cup \{\bot\}$ resp. $(\ao \times \ao \rightarrow \mathbb{N}
\cup \{\bot\})$. We require that for all $a, b, c \in \ai (\ao)$ that 
$\model(a,a) = 0$ and $\model(a, b) + \model(b, c) \geq \model(a, c)$.
Given a \emph{BIA} $F =  (\states,q^{0}, \ai, \ao,
\trans)$ and an error model $\model$, let the {\em modified system} be
 $F\otimes{\model} = (\states ,q^{0},\ai,
\ao, \trans^{e})$ with a weight function $\weight_{\model}: \trans^{e}
\rightarrow N$, where the terms are defined as follows:
\begin{compactitem}
\item $(s, \sigma_{2}, t) \in \trans^{e} \Leftrightarrow 
%(s, \sigma_{2}, t) \in \trans \: \vee \:
((s, \sigma_{1}, t) \in \trans \wedge \model(\sigma_{1},\sigma_{2}) \not
= \bot)$ ;
%(s_{e}, (\sigma_{1}, \sigma_{2}), t_{e}) \in E^{e})$;
\item $\weight_{\model}((s, \sigma_{2}, t)) = \min_{(s, \sigma_1,
t) \in \trans} \{\model(\sigma_{1},\sigma_{2})\}$;
%   \weight((s_{e}, (\sigma_{1}, \sigma_{2}), t_{e}))\}$
\end{compactitem}
Note, that the automata enhanced with input error models 
are not \emph{BIAs} as they may not be input deterministic. However, all the definitions for \emph{BIAs} can be naturally
interpreted on a \emph{BIA} composed with an error model.
Given two \emph{BIAs} $F$ with an
output error model $\model_O$ and $F'$ with an input error model
$\model_{I}$ we construct a game
$\game_{F\otimes{\model_{O}},F'\otimes{\model_{I}}}$ for systems
$F\otimes{\model_O}$ and $F' \otimes \model_I$ similarly as is described for
\emph{BIAs} in the previous subsection. We
measure the ``cheating'' performed by Player~2 as either the
limit-average or the discounted sum of the weights on the transition.
The transitions going out from Player~1 states get weight $0$ with an
exception of the selfloop on $s_{err}$ state that gets the maximal
weight assigned by the error model.
The weight of an edge from a Player~2 state is either (a)~twice the
weight of the corresponding  $F' \otimes M_I$ transition when matching
inputs; or (b)~twice the weight of the corresponding $F \otimes M_O$
transition when matching outputs. The factor $2$ occurs due to
normalization.
Given two BIAs $F$ and $F'$, a quantitative objective $f \in \{$LimAvg,
$\Disc_{\lambda}\}$ and an input (output) error model $\model_I
(\model_O)$, the {\em interface simulation distance} $d^f(F \otimes
\model_I, F' \otimes \model_O)$ is defined to be
the value of game $\game_{F\otimes{M_{O}},F'\otimes{M_{I}}}$.
Consider again the example in Figure~\ref{fig:introduction}, when using
error models that can play input (output) actions interchangeably by
receiving penalty $1$. The distances $d$ among the systems for the
quantitative objective LimAvg are presented in
Table~\ref{tab:distances}.
\begin{wraptable}{r}{4cm}
\centering
\begin{tabular}{c||ccc}
 & Int1 & Int2 & Int3 \\
\hline
IntA & 1 & 1/2 & 1/2 \\
IntB & 0 & 0 & 0\\
\end{tabular}
\caption{Ex. 1}
\label{tab:distances}
\vspace*{-0.5cm}
\end{wraptable}
The result $d($IntA, Int1$) = 1$ is surprising when comparing to
simulation where the distance would be $0$. The high distance is due to
the alternating matching. Player~1 chooses to play input $b?$ in IntA.
Player~2 has no choice but to respond by $b?$ and receiving the first
penalty. Again Player~1 plays the $e!$ output action forcing the second
Player~2 to cheat again. By repeating these transitions Player~1 can
force Player~2 to receive a penalty in every turn and therefore the
distance is 1.
The distance can be improved by adding an $b?$ input action as is shown
in the case of $Int2$, where the distance has decreased to $1/2$. Player~2
can now match every possible input, but fails to react on the $e!$
output action. Player~1 can ensure the value $1/2$ by playing $b?$ an
$e!$ repeatedly.
The second option to improve the distance is to remove some of the
output edges as is shown in $Int3$. Player~2 still cannot match the
input $b?$ but can respond to $c!$ without receiving any penalty. As in
the previous case playing a sequence $b?,c!$ ensures value $1/2$ for
Player~1.

\vspace{-0.5em}
\subsection{Complexity}
\vspace{-0.3em}
By the results presented in \cite{ZP96} the complexity of finding the
value of the game for \emph{LimAvg} objective is in $\mathcal{O}(|V|^3
\cdot |E| \cdot W)$, where $|V|$ is the number of game states, $|E|$ is
the number of edges and $W$ is the maximal non--infinite weight used in the game. In
our case for \emph{BIA} $F = (\states_F,q^{0}_F, \ai_F, \ao_F,
\trans_F)$ and $G  = (\states_G,q^{0}_G, \ai_G, \ao_G, \trans_G)$ and
error model $\model_O, \model_I$, the number of states in the game
$\game_{F \otimes \model_O, G \otimes \model_I}$ is $|\states_F| \cdot
|\states_G| \cdot (|A_F| + |A_G| + 1)+1$ and the number of edges is bounded by $|V|^2$.  
The algorithm for $\Disc_{\lambda}$ given a fixed $\lambda$ is in PTIME by
a variation of an algorithm presented in \cite{ZP96}.

\vspace{-0.7em}
\section{Properties of Interface Simulation Distances}
\vspace{-0.3em}
In this section, we present properties of the interface simulation
distance. The distance satisfies the triangle inequality and does not
increase when \emph{BIAs} are composed with a third
interface. Moreover, the distance can be bounded from above and below
by considering the abstractions of the systems.  

\vspace{-0.5em}
\subsection{Triangle Inequality}
\vspace{-0.3em}
The triangle inequality is the quantitative analogue of the boolean
transitivity property. We show that the interface simulation distance is
a directed metric, i.e., it satisfies the triangle inequality and the
reflexivity property. The proof is similar to the case of pure simulation distances presented
in \cite{CHR10a}.
\begin{theorem}
For $f \in \{$LimAvg, $\Disc_{\lambda}\}$ the interface simulation distance $d^f$ is a directed metric, i.e.:
\begin{compactenum}
\item For all error models $\model_{I}, \model_{O}$ and BIAs $F_1,F_2,F_3$ we have:
$$d^f(F_1 \otimes \model_{O}, F_3 \otimes \model_{I}) \leq d^f(F_1 \otimes
\model_{O}, F_2 \otimes \model_{I}) + d^f(F_2 \otimes \model_{O}, F_3
\otimes \model_{I})$$
\item For all error models $\model_{I}, \model_{O}$ and BIAs $F$ we have $d^f(F
	\otimes \model_{O}, F \otimes \model_{I}) = 0$.
\end{compactenum}
\end{theorem}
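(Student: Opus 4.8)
The plan is to handle the two parts separately, with reflexivity as a short warm-up and the triangle inequality carrying the real work. For part~2 I would exhibit a Player~2 strategy in $\game_{F \otimes \model_O, F \otimes \model_I}$ whose value is $0$, namely the diagonal strategy that keeps the two copies of $F$ in the same state. If Player~1 demands an input $\sigma_I$ on the first copy along $s \stackrel{\sigma_I}{\rightarrow} t$, Player~2 replays the identical transition on the second copy; input-determinism forces that copy to $t$ as well, at penalty $\model_I(\sigma_I,\sigma_I)=0$. Symmetrically, an output demand $s \stackrel{\sigma_O}{\rightarrow} t$ on the second copy is answered by the same transition on the first, restoring the diagonal at penalty $\model_O(\sigma_O,\sigma_O)=0$. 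Every conformant play then has all weights $0$, so its $\LimAvg$ (resp.\ $\Disc_\lambda$) value is $0$; since error-model weights are non-negative, no play can have value below $0$, and the game value is exactly $0$.

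For part~1, write $\game_{13},\game_{12},\game_{23}$ for the games $\game_{F_1\otimes\model_O,F_3\otimes\model_I}$, $\game_{F_1\otimes\model_O,F_2\otimes\model_I}$, $\game_{F_2\otimes\model_O,F_3\otimes\model_I}$, and let $d_{12},d_{23}$ be the values of the last two. By the memoryless-determinacy theorem I fix memoryless optimal Player~2 strategies $\tau_{12},\tau_{23}$ and build a Player~2 strategy $\tau$ for $\game_{13}$ that runs both as private subroutines, keeping their $F_2$-components synchronized. The delicate feature is that the alternating game routes inputs and outputs in opposite directions through the middle automaton. On an input demand $\sigma_I$ of $F_1$, I feed it to $\tau_{12}$, which answers with an $F_2$-transition $s_2\stackrel{\mu}{\rightarrow}t_2$ matching $\sigma_I$ at cost $\model_I(\mu,\sigma_I)$; I present that transition as Player~1's input demand to $\tau_{23}$, which answers with an $F_3$-transition matching $\mu$ at cost $\model_I(\rho,\mu)$; finally $\tau$ plays this $F_3$-transition. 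On an output demand $\sigma_O$ of $F_3$ the flow reverses: $\tau_{23}$ answers first with an $F_2$-output, which becomes Player~1's output demand to $\tau_{12}$, and $\tau$ replays $\tau_{12}$'s $F_1$-answer. In both cases the chosen targets in $F_1,F_2,F_3$ agree across the games, so the simulated plays stay consistent and $\tau$ is a well-defined finite-memory strategy (its memory being the middle state $s_2$ together with the internal states of $\tau_{12},\tau_{23}$).

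The crux is a per-round inequality: the penalty $\tau$ pays in round $i$ is at most the sum of the penalties paid in round $i$ of the two simulations, since $\model_I(\rho,\sigma_I)\le\model_I(\rho,\mu)+\model_I(\mu,\sigma_I)$ for inputs and $\model_O(\kappa,\sigma_O)\le\model_O(\kappa,\nu)+\model_O(\nu,\sigma_O)$ for outputs, both instances of the triangle inequality demanded of error models; reading $\bot$ as $+\infty$ also guarantees that a finite match exists in $\game_{13}$ whenever both simulations have one, so $\tau$ is never driven into $s_{\mathsf{err}}$, and the minimization in the definition of $\weight_{\model}$ only lowers the cost further. For $\Disc_\lambda$ the combination is immediate from additivity and monotonicity of the discounted sum, giving value at most $d_{12}+d_{23}$. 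For $\LimAvg$ the only subtlety is that $\liminf$ is not additive; I would sidestep it by observing that against the finite-memory $\tau$ Player~1 has a memoryless best response, so the $\game_{13}$ play and hence both simulated plays are eventually periodic, and on periodic plays $\liminf$, $\limsup$ and the mean coincide. The per-round bound then gives $\LimAvg(\game_{13})\le d_{12}+d_{23}$, as required.

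I expect the main obstacle to be precisely this opposite routing of inputs and outputs through $F_2$: making rigorous that one synchronized middle state simultaneously supports the forward input chain $F_1\!\to\!F_2\!\to\!F_3$ and the backward output chain $F_3\!\to\!F_2\!\to\!F_1$, and that feeding each subroutine's own answers back as the other game's Player~1 demands yields legitimate conformant plays. This phenomenon has no counterpart in the closed-system simulation distance of~\cite{CHR10a}; the $\LimAvg$ recombination is a secondary but necessary point, dispatched by the periodicity argument above.
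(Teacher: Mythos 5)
Your route is the same one the paper intends: the paper gives no details for this theorem, saying only that the proof is similar to the pure simulation distance case of~\cite{CHR10a}, and that proof is precisely your strategy-composition argument. Your part~2 is fine (one nitpick: $F\otimes\model_I$ need not be input-deterministic, as the paper itself notes, but this is irrelevant since Player~2 simply chooses the copy of Player~1's transition, at weight $0$). For part~1, your handling of the specifically alternating difficulty --- inputs chained forward $F_1\to F_2\to F_3$, outputs chained backward $F_3\to F_2\to F_1$ through one synchronized middle state, the error-model triangle inequality with $\bot$ read as $+\infty$, the minimization in $\weight_\model$ only helping, and the recognition that $\liminf$-non-additivity forces a lasso/periodicity argument for $\LimAvg$ against the finite-memory composed strategy --- is correct and is exactly the adaptation this setting requires.

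There is, however, one genuine hole. You assert that ``$\tau$ is never driven into $s_{\mathsf{err}}$'', but your justification covers only the situation in which both subroutines actually produce matching transitions. Nothing prevents $\tau_{12}$ (or $\tau_{23}$) from being forced into $s_{\mathsf{err}}$: the current $F_2$-state may have no transition carrying any label that $\model_I$ converts to the demanded $\sigma_I$ at finite cost. At that moment there is no $F_2$-transition to feed forward, and your $\tau$ is simply undefined. The patch is short but must be stated. For $\LimAvg$: if some play conformant to $\tau$ drives a simulation into $s_{\mathsf{err}}$, then that simulated play has value equal to the maximal error-model weight $W$, which is the largest possible value of any play in any of these games; since it is conformant to the optimal $\tau_{12}$, this forces $d_{12}=W$ (resp.\ $d_{23}=W$), and the triangle inequality becomes trivial, $d_{13}\le W\le d_{12}+d_{23}$; otherwise your argument applies verbatim. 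For $\Disc_\lambda$, let $\tau$ play arbitrarily after the breakdown and compare round by round: once $\rho_{12}$ sits in $s_{\mathsf{err}}$ it pays $W$ at every step, i.e.\ $W(1+\lambda)\lambda^{2i}$ in round $i$, which dominates the at most $2W\lambda^{2i+1}$ (or $W(1+\lambda)\lambda^{2i}$, if $\rho_{13}$ itself reaches $s_{\mathsf{err}}$) that round $i$ of $\rho_{13}$ can cost, because $2\lambda\le 1+\lambda$; so the per-round inequality persists and the summation goes through. With this case added, your proof is complete.
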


%Consider arbitrary \emph{BIAs} $F_1$ and $F_2$ and a quantitative
%objective $f \in \{$LimAvg, $\Disc_{\lambda}\}$.  
%In case the alternating simulation of $F_1$ by $F_2$ holds, for all
%input(output) error models $\model_I(\model_O)$ the distance $d^f(F_1
%\otimes \model_O, F_2 \otimes \model_I)$ is $0$.
\subsection{Composition}

In this part we show that the distance between two \emph{BIAs} 
$F$ and $F'$ does not increase when both are composed with a third
 \emph{BIA} $G$, when using the same error models $\model_{O}$,
$\model_{I}$.

As we want to use the same output error model $\model_{O}$
in $F$ and $F\parallel G$ (similarly $\model_{I}$ in $F'$ and $F'
\parallel G$), we restrict the error models. Assume $\sigma_1 \not =
\sigma_2$, then:
\begin{compactitem}
\item If $\model_{O}(\sigma_{1},\sigma_{2}) \not = \bot$, then
$\sigma_{2} \in \ao_{F \parallel G} \setminus \ao_{G}$.
\item If $\model_{I}(\sigma_{1},\sigma_{2}) \not = \bot$, then
$\sigma_{2} \in \ai_{F' \parallel G} \setminus \ai_{G}$.
\end{compactitem}

\begin{remark}
\label{lem:alphabet}
By the above restriction on the error models, we get $\ao_{F} = \ao_{F
\otimes M_{O}}$ and $\ai_{F'} = \ai_{F' \otimes M_{I}}$. Therefore, we
get that $F \otimes \model_O$ and $F' \otimes \model_I$ are composable
with $G$ if $F$ and $F'$ are composable with $G$.
\end{remark}

The following lemma establishes that extending $F$ with the error model does
not change compatibility with $G$. Note that this would not be the case
if the assumption on the error models was violated.
\begin{lemma}
\label{lem:compatible}
For all \emph{BIAs} $F,G$ and error model $\model_I, \model_O$, if $F$ is compatible with $G$, then $F \otimes \model_O (\model_I)$ is compatible
with $G$.
\end{lemma}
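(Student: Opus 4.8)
The plan is to show that adding the error model $\model_O$ (or $\model_I$) to $F$ does not create any new reachability of error states in the product with $G$, which is exactly what is needed to preserve compatibility. Recall that compatibility of $F$ with $G$ means that no error state is reachable in $F \otimes G$ using only output actions, where an error state $(p,q)$ is one at which some shared action is offered as an output by one component but not accepted as an input by the other. The key observation, guaranteed by Remark~\ref{lem:alphabet}, is that the restriction on the error models ensures $\ao_F = \ao_{F \otimes \model_O}$ and $\ai_{F'} = \ai_{F' \otimes \model_I}$. In other words, the error model may add new cheating transitions, but every cheating output $\sigma_2$ it produces lies outside $\ao_G$ (and dually for inputs), so the \emph{set} of actions relevant to forming shared pairs with $G$ is unchanged.

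First I would fix the output case; the input case is symmetric. Let $(p,q)$ be an arbitrary state of $F \otimes \model_O \otimes G$ reachable from the initial state using only output actions, and I would argue by induction on the length of the output-only path that $(p,q)$ is reachable by an output-only path in $F \otimes G$ as well, and hence (using compatibility of $F$ with $G$) is not an error state. For the inductive step, consider an output transition taken in $F \otimes \model_O \otimes G$. If the action is not shared, either $G$ moves alone (identical in both products) or $F \otimes \model_O$ moves alone on some output $\sigma_2$; since $\ao_{F \otimes \model_O} = \ao_F$, any enabled output action of $F \otimes \model_O$ is already an output action available in the alphabet of $F$, and the corresponding non-cheating transition in $F$ reaches the same target state, so $F \otimes G$ can simulate the move. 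If the action \emph{is} shared, then by the error-model restriction it cannot be a cheating action (a genuine cheat $\sigma_2 \neq \sigma_1$ lands outside $\ao_G$, hence outside $\text{shared}(F,G)$), so the move corresponds to a real synchronized transition already present in $F \otimes G$.

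The second thing I would verify is the error-state condition itself: I must check that the error predicate evaluated at $(p,q)$ gives the same answer in $F \otimes \model_O \otimes G$ and in $F \otimes G$. This again reduces to the equality $\ao_{F \otimes \model_O}(p) \cap \text{shared}(F,G) = \ao_F(p) \cap \text{shared}(F,G)$ and the fact that the input alphabet of $F$ is untouched by an output error model. Because the error model only adds outputs outside $\ao_G$, it adds no shared output that $G$ fails to accept, and it removes nothing; so $(p,q)$ is an error state in the modified product if and only if it is one in $F \otimes G$. Combining the two parts: every output-reachable state in the modified product corresponds to an output-reachable, hence non-error, state in $F \otimes G$, so no error state is output-reachable in $F \otimes \model_O \otimes G$, which is the definition of $F \otimes \model_O$ being compatible with $G$.

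The main obstacle I anticipate is being careful about the bookkeeping of which alphabet each construct uses, since the error model makes the modified automaton non-input-deterministic and introduces weighted cheating edges that are \emph{not} part of the original transition relation. The whole argument hinges on the error-model restriction forcing every genuine cheat to produce an action disjoint from $G$'s shared alphabet; the remark flags explicitly that the statement fails without this assumption, so the crux is to invoke it at exactly the two points above (ruling out shared cheats, and equating the error predicates) rather than any deep game-theoretic reasoning. Once that alignment is pinned down, the induction is routine.
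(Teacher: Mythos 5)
Your proof follows essentially the same route as the paper's own (two-sentence) argument: replay every output-only path of $(F \otimes \model_O) \otimes G$ inside $F \otimes G$ by substituting each cheating action with the original action it was derived from, and check that the error predicate is unchanged. However, the inductive step has a genuine gap, at precisely the point the paper's sketch also glosses over. Consider a cheating transition $(p,\sigma_2,p')$ of $F \otimes \model_O$ derived from an original transition $(p,\sigma_1,p')$ of $F$ with $\sigma_1 \neq \sigma_2$. The stated restriction on error models constrains only the cheating action $\sigma_2$, never the relabelled action $\sigma_1$; in particular $\sigma_1$ may be \emph{shared} with $G$ (an output of $F$ that is an input of $G$) while $\sigma_2$ is not. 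In the modified product, the move on $\sigma_2$ is taken by the $F$-component alone and reaches $(p',q)$; but in $F \otimes G$ the replacement transition on $\sigma_1$ is a \emph{joint} transition, so it reaches $(p',q')$ where the $G$-component has also moved. Your claim that ``the corresponding non-cheating transition in $F$ reaches the same target state, so $F \otimes G$ can simulate the move'' is false in this case, and the induction invariant (every output-reachable state of the modified product is output-reachable in $F \otimes G$) is not maintained. A secondary slip: ``outside $\ao_G$, hence outside $\emph{shared}(F,G)$'' is not a valid inference, since shared actions output by $F$ lie in $\ai_G$, not $\ao_G$ (composability already gives $\ao_F \cap \ao_G = \emptyset$, so that clause of the restriction excludes nothing).

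This desynchronization is not a bookkeeping artifact; it can create genuinely new error states. Take $F$ with $\ai_F = \{z\}$, $\ao_F = \{x,y\}$ and transitions $f_0 \xrightarrow{x} f_1$, $f_0 \xrightarrow{z} f_3$ (no transitions from $f_1$), and $G$ with $\ai_G = \{x\}$, $\ao_G = \{z\}$ and transitions $g_0 \xrightarrow{x} g_1$, $g_0 \xrightarrow{z} g_2$. Then $F \sim G$: the only output-reachable states of $F \otimes G$ are $(f_0,g_0)$, $(f_1,g_1)$, $(f_3,g_2)$, none of which is an error state. Now let $\model_O(x,y)=1$, which satisfies the paper's restriction since $y \in \ao_{F \parallel G} \setminus \ao_G$. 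In $(F \otimes \model_O) \otimes G$ the cheat yields the output transition from $(f_0,g_0)$ on $y$ to $(f_1,g_0)$, and $(f_1,g_0)$ is an error state because $z \in \ao_G(g_0)$ but $z \notin \ai_F(f_1)$. So the replay/induction cannot be completed as written: a correct argument must either assume a stronger hypothesis than the one you (and the paper's sketch) invoke --- for instance that cheating relabels only actions disjoint from $A_G$, i.e.\ $\sigma_1 \neq \sigma_2$ implies $\sigma_1, \sigma_2 \notin \emph{shared}(F,G)$ --- or supply a separate argument handling the case of a shared $\sigma_1$, which the plain state-for-state replay does not.
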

The proof for the output error model $\model_O$ follows easily from the fact that any
sequence of output actions in $(F \otimes \model_O) \otimes G$ can be
replayed in $F \otimes G$ by replacing those actions that are added by
the error model in $F \otimes \model_O$  with the original transitions
from $F$. The case of input error model follows directly from the definition.
%\begin{proof}
%Assume by contradiction that $F \otimes \model_{O}$ and $G$ are not
%compatible. It follows that from the initial state $(F \otimes \model_O)
%\otimes G$ an error state is reachable using only output actions. By the
%error model definition whenever there is an output action transition $s
%\stackrel{\sigma_O}{\longrightarrow} s'$ in $F \otimes \model_O$, there
%exists an output action transition $s
%\stackrel{\sigma'_O}{\longrightarrow} s'$ for some $\sigma'_O \in
%\ao_F$. We match every such sequence in $(F \otimes \model_O) \otimes G$
%by a sequence of output actions leading to an error state from the
%initial state of $F \otimes G$. Every output action of $G$ can be
%matched directly, output actions of $F \otimes \model_O$ are matched by
%they state preserving transition. Every shared action in $(F \otimes
%\model_O) \otimes G$ is due to the error model definition a shared
%action $F \otimes G$ and therefore can be matched directly. And by the
%same argument if a state $(p,q)$ in $(F \otimes \model_O) \otimes G$ is
%an error state then it is also an error state in $F \otimes G$.
%\end{proof}

First, we establish the following preliminary lemma in anticipation of
the main theorem. 
We need to show that property of incompatibility and of being error
states is preserved even when the BIAs are extended with error models.
\begin{lemma}
\label{lem:incompatibility}
Let $F$, $F'$, and $G$ be BIAs with $\emph{shared}(F',G) \subseteq  \emph{shared}(F,G)$, and $M_{O}, M_{I}$ error
models. Suppose that $(p', q)$ is a state in $(F' \otimes G) \otimes M_{I}$
and $p \asim p'$ for some alternating simulation relation $\asim \; \subseteq \states_{F}
\times \states_{F'}$ between $F \otimes \model_O$ and $F' \otimes
\model_I$. Then,
\begin{compactenum}
\item $(p',q)$ is an error state, then $(p, q)$ is an
error state;
\item $(p',q)$ is an incompatible state, then $(p, q)$ is an
incompatible state.
\end{compactenum}
\end{lemma}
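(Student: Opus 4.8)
The plan is to prove the two parts in order, using part~1 as a black box inside part~2. Part~1 is a local, one-step argument that follows from the two alternating-simulation clauses applied to $p \asim p'$, together with the alphabet-preservation guaranteed by the restriction on the error models (Remark~\ref{lem:alphabet}, giving $\ao_{F \otimes \model_O} = \ao_F$ and $\ai_{F' \otimes \model_I} = \ai_{F'}$) and the hypothesis $\emph{shared}(F',G) \subseteq \emph{shared}(F,G)$. Part~2 is a path-replay argument: I would take an output-only path from $(p',q)$ to an error state witnessing incompatibility on the $F'$-side and replay it on the $F$-side, keeping the $G$-component identical and maintaining the invariant $p_i \asim p_i'$, then invoke part~1 at the end. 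Throughout I would treat $(F' \otimes G) \otimes \model_I$ as the product of $F' \otimes \model_I$ with $G$; this is legitimate because the restriction on $\model_I$ forces every cheating input target outside $A_G$, so adding the error model commutes with composition by $G$.

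For part~1, I would unfold the error-state definition for the product of $F' \otimes \model_I$ and $G$, whose shared alphabet is $\emph{shared}(F',G)$. The witnessing action $\sigma$ falls into two cases. If $\sigma \in \ao_{F' \otimes \model_I}(p') = \ao_{F'}(p')$ with $\sigma \notin \ai_G(q)$, then clause~2 (outputs of the right state matched by the left) gives $\sigma \in \ao_{F \otimes \model_O}(p)$; since $\sigma \in \emph{shared}(F',G) \subseteq \emph{shared}(F,G)$ and $\sigma \notin \ai_G(q)$ still holds, $(p,q)$ is an error state. If instead $\sigma \in \ao_G(q)$ with $\sigma \notin \ai_{F' \otimes \model_I}(p')$, then $\ai_{F'}(p') \subseteq \ai_{F' \otimes \model_I}(p')$ gives $\sigma \notin \ai_{F'}(p')$, and clause~1 (inputs of the left state matched by the right) yields $\ai_{F \otimes \model_O}(p) = \ai_F(p) \subseteq \ai_{F' \otimes \model_I}(p')$, hence $\sigma \notin \ai_{F \otimes \model_O}(p)$; again $(p,q)$ is an error state.

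For part~2 I would induct on the length of an output-only path $(p',q) = (p_0',q_0) \to \dots \to (p_n',q_n)$ into an error state, keeping $p_i \asim p_i'$ and an identical $G$-component. Each product output transition is classified by its action $\sigma$: (A) a non-shared output of $F'$, lifted to a matching $F$-output by clause~2, where $\sigma \notin A_G$ forces $\sigma \notin \emph{shared}(F,G)$ so the move stays $F$-local; (B) a non-shared output of $G$, which stays $G$-local in $F \otimes G$ as well, since $\ai_F \subseteq \ai_{F'}$ rules out $F$ sharing a fresh output action of $G$, so I keep the $F$-state fixed; (C1) a shared action output by $F'$ and input by $G$, lifted to an $F$-output by clause~2 and combined with the given $G$-input into a joint move; and (C2) a shared action input by $F'$ and output by $G$. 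In case~(C2), if $\sigma \notin \ai_F(p)$ then $(p,q)$ is already an error state and I stop; otherwise $F$ takes its unique $\sigma$-successor and the invariant is restored as explained below. When the replay reaches $(p_n,q_n)$, part~1 applied at $p_n \asim p_n'$ shows it is an error state, so an error state is reachable from $(p,q)$ by outputs and $(p,q)$ is incompatible.

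The delicate step is case~(C2): the alternating simulation gives no direct handle on inputs taken by the right-hand ($F'$) component, so $p_{i+1} \asim p_{i+1}'$ is not immediate. I would recover it by noting that a shared action lies in $A_G$, while the restriction on $\model_I$ forces every cheating input target outside $A_G$; hence the $F' \otimes \model_I$-transition on $\sigma$ is a genuine (non-cheating) $F'$-transition, which by input determinism of $F'$ is the unique $\sigma$-successor of $p'$. Clause~1, applied to the $F$-input $\sigma$, produces some $\sigma$-successor of $p'$ related to the $F$-successor, and uniqueness pins it to $p_{i+1}'$. This interplay between input determinism and the error-model restriction is the main obstacle; the remaining cases are routine once one checks, via $\ai_F \subseteq \ai_{F'}$ and $\ao_F \cap \ao_G = \emptyset$, that no action that is shared for $(F,G)$ but not for $(F',G)$ can arise as an output step we are forced to replay.
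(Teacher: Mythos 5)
Your proof is correct and takes essentially the same route as the paper's: part~1 by direct case analysis on the error-state definition using clause~2 for the output case and the contrapositive of clause~1 for the input case, and part~2 by replaying the output-only witness path from $(p,q)$ with the invariant $p_i \asim p_i'$, stopping early if an error state is hit and otherwise applying part~1 at the final state. Your case~(C2) analysis---using input determinism on shared actions together with the error-model restriction to pin down the matching successor---spells out rigorously a step that the paper's terse proof (``actions that change the $F'$ component can be simulated in $F$'') leaves implicit.
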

\begin{proof}
\begin{compactenum}
\item 
From the definition of an error state, it follows that there exists an action $a \in
\emph{shared}(F',G) \subseteq \emph{shared}(F,G)$ such that either,
\begin{compactitem}
\item $a \in \ao_{F'}(p')$ and $a \not \in \ai_{G}(q)$, or
\item $a \in \ao_{G}(q)$ and $a \not \in \ai_{F'}(p')$.
\end{compactitem}
In the former case as $p \asim p'$, we have $a \in \ao_{F}(p)$, hence
$(p,q)$ is an error state. In the latter case from $a \not \in
\ai_{F'}(p')$ and $p \asim p'$ follows that $a \not \in \ai_{F}(p)$ and
again $(p,q)$ is an error state.

\item If $(p',q)$ is an incompatible state in $(F' \otimes G) \otimes
M_{I}$, it follows that an error state is autonomously reachable from
$(p',q)$ using only output actions.
As $p \asim p'$ the same sequence of actions can be replayed from the
state $(p,q)$: (i)~the actions that change only the $G$ component of the
state are the same, and (ii)~the actions that change the $F'$ component
can be simulated in $F$ as $p \asim p'$.
We have either (a)~that the replayed sequence reaches an error state
before the end; or (b)~the last reached state is an error state. 
The claim (b) follows from the previous part~1. In both cases, we get
the required result.
\end{compactenum} 
\end{proof}

The following lemma states that the broadcast interface automata
enhanced with error models have
the same properties on composition as interface automata. Note that the
restrictions on error models imply that the \emph{BIA} composed with an error model
remains input deterministic on
shared actions.  Due to this fact the proof is a
variation of a similar result for  interface automata presented in \cite{AH05}.

\begin{lemma}
\label{lem:qualitative_case}
Consider three BIAs $F, G$, and $F'$ with input (output) error models
$\model_I (\model_O)$, such that $F \otimes \model_O$ and $G$
are composable and $\emph{shared}(F',G) \subseteq  \emph{shared}(F,G)$.
If $F \otimes \model_O \sim G$ and $F \otimes \model_O \asim F' \otimes \model_I$, then $F' \sim G$.
\end{lemma}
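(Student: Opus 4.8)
The plan is to prove the statement by contraposition: assuming that $F'$ is incompatible with $G$, I would derive that $F \otimes \model_O$ is incompatible with $G$, contradicting the hypothesis $F \otimes \model_O \sim G$. Before starting, I would record that $F' \sim G$ is well defined, i.e.\ that $F'$ and $G$ are composable. The output side is immediate: from $F \otimes \model_O \asim F' \otimes \model_I$ and Remark~\ref{lem:alphabet} we have $\ao_{F'} = \ao_{F' \otimes \model_I} \subseteq \ao_{F \otimes \model_O} = \ao_F$, so $\ao_{F'} \cap \ao_G \subseteq \ao_F \cap \ao_G = \emptyset$. The input side follows from $\emph{shared}(F',G) \subseteq \emph{shared}(F,G)$ together with the composability of $F$ and $G$ (which Remark~\ref{lem:alphabet} also transfers to $F\otimes\model_O$); I would treat this alphabet bookkeeping as routine.

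The first substantial step is to move the assumed incompatibility from $F' \otimes G$ into $(F' \otimes G) \otimes \model_I$, since that is the setting in which Lemma~\ref{lem:incompatibility} is stated. Here the restriction imposed on $\model_I$ is essential: every transition it adds carries an action $\sigma_2 \in \ai_{F' \parallel G} \setminus \ai_G$, i.e.\ an input that is \emph{not} shared with $G$. Consequently $\model_I$ introduces neither output transitions nor transitions on shared actions, so it changes neither the set of error states (which depends only on shared actions) nor the output-labelled reachability used to define compatibility; moreover the added transitions commute with composition against $G$, so that $(F' \otimes G) \otimes \model_I = (F' \otimes \model_I) \otimes G$. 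Hence the compatible states of $F' \otimes G$ and of $(F' \otimes G) \otimes \model_I$ coincide, and incompatibility of $(q^0_{F'}, q^0_G)$ in $F' \otimes G$ yields incompatibility of $(q^0_{F'}, q^0_G)$ in $(F' \otimes G) \otimes \model_I$. This is the converse companion to Lemma~\ref{lem:compatible}, and it is exactly where the assumption on the error models is consumed.

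Next I would invoke the alternating simulation. From $F \otimes \model_O \asim F' \otimes \model_I$ there is a relation $\asim\,\subseteq \states_F \times \states_{F'}$ with $q^0_F \asim q^0_{F'}$. Applying Lemma~\ref{lem:incompatibility} to the state $(q^0_{F'}, q^0_G)$, whose incompatibility was just established, together with $q^0_F \asim q^0_{F'}$, gives that $(q^0_F, q^0_G)$ is incompatible, now as a state of $(F \otimes \model_O) \otimes G$. That is precisely $F \otimes \model_O \not\sim G$, contradicting the hypothesis, so $F' \sim G$ follows.

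I expect the main obstacle to be the middle step, namely justifying that applying the input error model $\model_I$ leaves compatibility unchanged in \emph{both} directions. The forward direction is already Lemma~\ref{lem:compatible}; the delicate converse, which is the one genuinely needed here, hinges entirely on the restriction $\sigma_2 \in \ai_{F' \parallel G} \setminus \ai_G$, and I would have to verify carefully that the added transitions (a)~commute with composition against $G$ so that the two bracketings of the product agree, and (b)~never touch shared actions or output-reachability, so that error states and the "reachable using only outputs" relation are preserved exactly.
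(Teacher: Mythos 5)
Your proposal is correct, and it is in fact more complete than what the paper provides: the paper gives no proof of this lemma at all, remarking only that the error-model restriction keeps the automata input deterministic on shared actions and that the proof is therefore ``a variation'' of the compatibility-preservation theorem for interface automata of de Alfaro and Henzinger. Your contraposition --- push an autonomous (output-only) error path of $F' \otimes G$ into $(F' \otimes G) \otimes \model_I$, then invoke Lemma~\ref{lem:incompatibility} to transfer incompatibility of the initial state across the alternating simulation, contradicting $F \otimes \model_O \sim G$ --- is exactly that classical argument, realized self-containedly from the paper's own machinery: Lemma~\ref{lem:incompatibility} packages the replay step, and the input-determinism fact the paper emphasizes is precisely what gets consumed inside that lemma's proof (when a shared output of $G$ must be simulated as an input ``the wrong way'' from $F'$ to $F$), so by using the lemma as a black box you legitimately inherit it for free. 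Two caveats on the parts you declared routine. First, composability of $F'$ with $G$ on the input side does not actually follow from the stated hypotheses: nothing in the definitions forbids an action $\sigma \in \ao_F \cap \ai_{F'} \cap \ai_G$, since refinement requires only $\ai_F \subseteq \ai_{F'}$ and $\ao_{F'} \subseteq \ao_F$, and in that corner case $F'$ and $G$ are not composable at all; this is an imprecision of the lemma statement itself (it tacitly assumes actions do not switch input/output polarity between $F$ and $F'$) rather than a flaw in your main argument. Second, the bracketing identity $(F' \otimes G) \otimes \model_I = (F' \otimes \model_I) \otimes G$ can fail if $\model_I(\sigma_1, \sigma_2) \neq \bot$ for some $\sigma_1 \in \ai_G$, because the restriction constrains only the played action $\sigma_2$; fortunately you do not need this identity --- your argument only requires that error states and output-labelled reachability are unchanged by $\model_I$, which holds regardless because $\model_I$ adds only input transitions whose labels are not shared with $G$.
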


Finally, we can prove the main theorem, showing that composition with
a third interface can only decrease the distance. 
In the game between the composed systems, we construct a Player~2
strategy that (a)~for the first component, use the Player~2 strategy
from the game $H_{F \otimes \model_O, F' \otimes \model_I}$; and (b)~for
the second component, copies the Player~1 transition.

There are two obstacles to this scheme of using the Player~2 strategy in
the first component: (a)~some of the actions become shared actions; and
(b)~some of the states of the composed system may become unreachable due
to their incompatibility.
Using Lemma~\ref{lem:incompatibility} and Lemma~\ref{lem:compatible}, we
will overcome the obstacles.
\begin{theorem}
\label{thm:quantitative_case}
Consider three \emph{BIAs} $F,G$, and $F'$, a quantitative objective
$f\in\{$LimAvg,$\Disc_{\lambda}\}$, and input (output)
error models $\model_{I}$, $\model_{O}$  such that $F$ and $G$ are
composable, compatible, and $\emph{shared}(F',G) \subseteq \emph{shared}(F,G)$.
Then,
$$d^f(F \otimes \model_{O},F' \otimes \model_{I}) \geq d^f((F \parallel G)
\otimes M_{O},(F' \parallel G) \otimes M_{I}).$$
\end{theorem}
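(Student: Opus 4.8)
The plan is to establish the stated inequality by constructing, for the composed game, a single Player~2 strategy whose value is at most $v := d^f(F \otimes \model_O, F' \otimes \model_I)$; since the value of a game equals $\inf_{\strab}\val_2(\strab)$, this bounds the composed value by $v$. First I fix, via the memoryless-determinacy theorem, an optimal memoryless Player~2 strategy $\strab$ in the base game $\game_{F \otimes \model_O, F' \otimes \model_I}$ realizing $v$, and I build from it a Player~2 strategy $\tilde{\strab}$ in $\game_{(F \parallel G) \otimes \model_O, (F' \parallel G) \otimes \model_I}$ following the scheme announced before the theorem. Each Player~1 move in the composed game is classified by whether it affects the $F/F'$-component, the $G$-component, or both (a shared action). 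On $F/F'$-moves $\tilde{\strab}$ replays the response of $\strab$ on that component and freezes the $G$-copy; on $G$-moves it copies Player~1's $G$-transition, which is legal because $G$ occurs identically on both sides; shared actions, which become outputs of the composition, are answered jointly, copying the $G$-part and using $\strab$ on the $F/F'$-part. Here the hypothesis $\emph{shared}(F',G) \subseteq \emph{shared}(F,G)$ ensures a joint $F'$-side move is matchable on the $F$-side, and the restriction on the error models ensures $\model_O,\model_I$ never relabel a shared action, so $\tilde{\strab}$ never cheats on the $G$- or shared part and incurs no penalty there.

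Maintaining $\tilde{\strab}$ I keep the invariant that every reachable composed configuration has its two $G$-copies synchronized and its $F/F'$-pair equal to a configuration reachable in the base game under $\strab$. Projecting away $G$ then sends each play $\rho'$ consistent with $\tilde{\strab}$ to a play $\rho$ consistent with $\strab$, and the only weighted edges --- the input/output matchings on the $F/F'$-component --- carry identical weights in $\rho'$ and $\rho$, the two plays differing only by the extra weight-$0$ transitions coming from the copied $G$-moves. Since $\tilde{\strab}$ is a legal strategy (shown below), no reachable Player~2 state is a dead end, so $\rho'$ never reaches $s_{\mathsf{err}}$ and the maximal selfloop weight never contributes; and because all penalties are nonnegative, interleaving weight-$0$ transitions can only decrease the $\LimAvg$ value and, as it merely delays penalties to larger discount exponents, also the $\Disc_\lambda$ value. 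Hence $f(\rho') \le f(\rho) \le v$, giving $\val_2(\tilde{\strab}) \le v$.

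The delicate point, and the one where the composition machinery of Lemmas~\ref{lem:compatible}, \ref{lem:incompatibility}, and~\ref{lem:qualitative_case} is genuinely needed, is to show that $\tilde{\strab}$ is legal: it must never be forced onto a transition deleted when the products were restricted to compatible states, and the right-hand composition must exist in the first place. Output matchings are harmless, since output successors of a compatible state remain compatible and output transitions are never removed; the only danger is that an input matching supplied by $\strab$ drives the right-hand state $(t_{F'}, s_G)$ into the incompatible region. To exclude this I use that whenever $v$ lies below the maximal, forced-$s_{\mathsf{err}}$ value, the optimal $\strab$ avoids $s_{\mathsf{err}}$, so the pairs it reaches form an alternating simulation $\asim$ between the modified systems $F \otimes \model_O$ and $F' \otimes \model_I$; Lemma~\ref{lem:qualitative_case} then yields $F' \sim G$, making the right-hand composition well-defined, while Lemma~\ref{lem:compatible} guarantees that attaching $\model_O$ or $\model_I$ preserves compatibility with $G$ on the left. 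Finally, since Player~1's input move into $(t_F, s_G)$ survived the restriction this state is compatible, and as $t_F \asim t_{F'}$, the contrapositive of Lemma~\ref{lem:incompatibility} delivers compatibility of $(t_{F'}, s_G)$, so $\tilde{\strab}$'s response indeed exists; the case $v = \text{maximal value}$ is immediate, as the right-hand distance is then trivially at most $v$.

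I expect this compatibility bookkeeping to be the main obstacle. Unlike the closed-system setting of~\cite{CHR10a}, replaying Player~2's possibly cheating responses inside a product restricted to compatible states can a priori leave that region, so the crux is to make the invariant $t_F \asim t_{F'}$ precise for the error-model-enriched systems --- so that Lemma~\ref{lem:incompatibility} applies along the entire play --- while verifying that shared actions, by the error-model restriction, never contribute a penalty.
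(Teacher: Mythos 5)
Your overall scheme is indeed the paper's: take an optimal positional Player~2 strategy $\strab$ in $\game_{F \otimes \model_O, F' \otimes \model_I}$, lift it to the composed game by copying $G$-moves and replaying $\strab$ on the $F/F'$-component, keep the invariant that reachable Player~1 states project to $\strab$-reachable (hence simulation-related) pairs, justify legality of input matchings via the contrapositive of Lemma~\ref{lem:incompatibility} together with Lemmas~\ref{lem:compatible} and~\ref{lem:qualitative_case}, and conclude by the ``same weights interspersed with zeros'' comparison. However, there is a genuine gap at exactly the case the paper singles out as the trickiest: a shared action $\sigma$ that is an \emph{output of $G$} and hence an \emph{input of both $F$ and $F'$}. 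In the composed game such a $\sigma$ is an output of $(F' \parallel G) \otimes \model_I$, so Player~1's move already fixes an input transition of $F'$ (say $r \xrightarrow{\sigma} r'$), and Player~2 must answer with an input transition of $F$. Your recipe ``use $\strab$ on the $F/F'$-part'' is undefined here: in the base game inputs are simulated from $F$ to $F'$ (Player~1 plays $F$-inputs, $\strab$ answers with $F'$-inputs), and no Player~2 state of $\game_{F \otimes \model_O, F' \otimes \model_I}$ corresponds to ``Player~1 played an $F'$-input''. This case needs inputs simulated the wrong way around, and the hypothesis $\emph{shared}(F',G) \subseteq \emph{shared}(F,G)$ by itself does not ``ensure a joint $F'$-side move is matchable on the $F$-side''.

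What is missing is the paper's three-step repair: (i)~compatibility of the current left-hand state $(p,q)$ guarantees that $F$ has \emph{some} $\sigma$-input transition at $p$ (otherwise $(p,q)$ would be an error state, $G$ outputting $\sigma$ that $F$ does not accept); (ii)~input determinism of $F$ and $F'$, together with the restriction on the error models (the output model does not touch input transitions, and $\model_I$ cannot relabel onto an action shared with $G$), makes this transition \emph{unique}, say $p \xrightarrow{\sigma} p'$; (iii)~to restore the invariant one translates the step back into the base game: there Player~1 plays the $F$-input $p \xrightarrow{\sigma} p'$, and since $\strab$ avoids $s_{\mathsf{err}}$ it must answer with some $\sigma$-labelled $F'$-transition, which by the same uniqueness is exactly $r \xrightarrow{\sigma} r'$; hence $(p',r')$ is again a $\strab$-reachable pair, and the matched transitions have weight $0$ on both sides. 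Without (i)--(iii) your strategy $\tilde{\strab}$ is simply not defined on these moves and your invariant is not maintained. A smaller inaccuracy: your claim that $\tilde{\strab}$ ``never cheats on the $G$- or shared part'' fails for shared actions output by $F'$ and input by $G$; there the $F$-side answer is $\strab$'s (possibly cheating) output match and carries the corresponding $\model_O$-penalty --- which is harmless for your bound, since it equals the base-game weight, exactly as your weight comparison requires.
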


\begin{proof}
We split the proof into two cases.

\noindent(a)~Player~2 cannot avoid reaching $s_{err}$ state in the
game $\game_{F\otimes M_O, F' \otimes M_I}$. This is the easier case and
we will not present the details here.

%If $f = \LimAvg$, the value of the game is the weight of the selfloop on
%$s_{err}$ state which is the maximal possible penalty and hence, the
%inequality trivially holds no matter what the value of $d^f((F \parallel G)
%\otimes M_{O},(F' \parallel G) \otimes M_{I})$.
%
%If $f = \Disc$, let $n$ be the number of steps that are taken to reach
%$s_{err}$. In the game $\game_{(F \parallel G) \otimes M_{O},(F'
%\parallel G) \otimes M_{I}}$, we do not need to consider the value of
%the trace after $n$ steps. For the first $n$ steps, we can complete the
%proof similar to the proof of part~(b).

\noindent(b)~Player~2 can avoid reaching the $s_{err}$ state in
the game $\game_{F\otimes{M_{O}},F'\otimes{M_{I}}}$.
We get that $F'\otimes M_{I}$ refines $F \otimes M_{O}$. Let $\asim'$ be the maximal alternating simulation relation and
furthermore, let $\strab_*$ be the optimal positional Player~2 strategy
in the game.
By Remark~\ref{lem:alphabet} we get that $F \otimes \model_O$ is
composable with $G$, from Lemma~\ref{lem:compatible} it follows that $F
\otimes \model_O$ is compatible with $G$ and finally by
Lemma~\ref{lem:qualitative_case} we get that the composition $F'
\parallel G$ is not empty.

%By Theorem \ref{thm:qualitative_case} we get that $(F \parallel
%G)\otimes M_{O} \asim (F' \parallel G)\otimes M_{I}$, in order
% to satisfy the assumptions of the theorem note the following:
%\begin{compactitem}
%\item By Remark~\ref{prop:alphabet} we get $\emph{shared}(F' \otimes
%\model_{I},G) \subseteq \emph{shared}(F \otimes \model_{O},G)$
%\item Using Lemma \ref{prop:compatible} we obtain $F \otimes \model_{O}
%\sim G$.
%\end{compactitem}
%As the refinement holds, we define the alternating simulations $\asim^{*}$ as follows:

Using the relation $\asim'$, we define an alternating simulation relation $\asim^{*}$
by $(F \parallel G)\otimes M_{O}$ of $(F' \parallel G)\otimes
M_{I}$ as follows: $$ (p,q) \asim^{*} (r,s) \Leftrightarrow p \asim' r
\: \wedge \: q = s\qquad\qquad\mbox{for $p$ and $r$ states of $F$ and $F'$ and $q$
and $s$ states of $G$} $$

We construct a positional Player~2 strategy $\strab$ in the game
$\game_{(F \parallel G)\otimes{M_{O}},(F' \parallel G)\otimes{M_{I}}}$
based on the strategy $\strab_*$, such that for all Player~1 strategies
$\straa$ the strategy $\strab$ will ensure that
$f(out(\straa,\strab)) \leq d^f(F \otimes \model_{O},F'
\otimes \model_{I})$.

We will match actions in the first component using the strategy
$\strab_*$. Actions from the $G$ component are
going to be copied directly. This will ensure that every reachable
Player~1 state $((p,q),\#,(r,s))$ satisfies $(p,q) \asim^{*} (r,s)$.
We have the following cases based on the kind of transition chosen by
Player~1:
\begin{compactitem}
\item \textbf{Unshared actions from $G$:} If Player~1 chooses the state
$((p, q'), \sigma_I, (r, s))$, we have $\strab(((p, q'), \sigma_I, (r,
s))) = ((p, q'), \#, (r, q'))$. This is possible as $q = s$. Similarly
for a state $((p, q), \sigma_O, (r, s'))$ we define $\strab(((p, q),
\sigma_O, (r,
s'))) = ((p, s'), \#, (r, s'))$ 
\item \textbf{Unshared input action from $F$:} If Player~1 chooses the
state $((p', q), \sigma_I, (r, s))$, we have $\strab(((p', q), \sigma_I,
(r, s))) = ((p', q), \#, (r', s))$ if $\strab_*(p', \sigma_I, r) = (p',
\#, r')$. We have to make sure that the transition $((r, s), \sigma_I,
(r', s))$ is not removed to ensure compatibility. In that case, from
Lemma~\ref{lem:incompatibility} and the fact that $(p, q) \asim* (r,
s)$, we would have that $(p, q)$ is an incompatible state. However, the
transition from compatible to incompatible state in the $(F \parallel G)
\otimes M_O$ component is possible only by a Player~1 transition.
Therefore, we have that Player~2 will not play an incompatible
transition if Player~1 does not play an incompatible transition.
\item \textbf{Unshared output action from $F'$:} This case is similar to
the previous, but simpler as output transitions are not removed to
ensure compatibility.
\item \textbf{Shared output action (input from $G$):} If Player~1 chooses the
state $((p, q), \sigma_O, (r', s'))$, we have $\strab(((p, q), \sigma_O,
(r', s'))) = ((p', s'), \#, (r', s'))$ if $\strab_*(p, \sigma_O, r') = (p',
\#, r')$. 
\item \textbf{Shared output action (output from $G$):} This case is the
trickiest due to the need to simulate inputs in the first component the
``wrong'' way (from $F'$ to $F$).

If Player~1 chooses the state $((p, q), \sigma_O, (r', s'))$, we have
$\strab(((p, q), \sigma_O, (r', s'))) = ((p', s'), \#, (r', s'))$ where
$p'$ is the unique state reachable from $p$ on the action $\sigma_O$.
The existence of this action is argued here.
\begin{itemize}
\item Firstly, due to input determinacy on shared actions, at most one state is reachable
from $p$ on action $\sigma_O$. Furthermore, there can be no transitions
with action $\sigma_O$ added by $\model_I$ as $\sigma_O$ is shared with
$G$.
\item Second, assuming that $(p, q)$ is compatible, we have that at
least one state is reachable from $p$ on action $\sigma_O$. As in the
above cases, we can argue that $(p, q)$ is compatible.
\end{itemize}
In the game $\game_{F \otimes M_O, F' \otimes M_I}$, we can translate
this step as follows. From $(p, \#, r)$,
Player~1 chooses the successor $(p', \sigma_O, r)$ (note that $\sigma_O$
is an input action for $F$ and $F'$); and then, $\strab_*((p', \sigma_O,
r)) = (p', \#, r')$. The justification is as follows: Since, $s_{err}$ is
not visited, $\strab_*$ has to choose a successor with the transition symbol
$\sigma_O$ (which is uniquely $p'$, as above). 

\end{compactitem}

Let $\straa$ be an arbitrary Player~1 strategy. If we consider the play
$\rho = out(\straa,\strab)$, 
(a)~If the first case from the $5$ above occurs, the transition weight
is $0$; and (b)~For any of the other cases, the transition weight is the
same as weights from a play $\rho'$ in $\game_{F \otimes \model_O, F' \otimes
\model_I}$ conformant to $\strab_*$.

Therefore, we have that weights in $\rho$ are weights in $\rho'$,
interspersed with some $0$ weights. Hence, we get
$$d^f((F \parallel G)\otimes \model_{O}, (F' \parallel G) \otimes
\model_{I}) \leq f(\rho') \leq f(\rho) \leq \val(\strab_*) = d^f(F \otimes \model_{O},F' \otimes
\model_{I})$$
proving the required result.
\end{proof}
\subsection{Abstraction}
In the classical boolean case, systems can by analyzed with the help of
sound over- and under-approximations. We present the quantitative
analogue of the soundness theorems for over- and under-abstractions.
The distances between systems is bounded by the distances between their
abstractions.

Given a \emph{BIA} $F = (\states,q^{0}, \ai, \ao, \trans)$ a {\em
$\forall \exists$ abstraction} is a \emph{BIA} $F^{\forall\exists} =
(S,[q^0], \ai, \ao, \trans^{\forall\exists})$, where $S$ are the
equivalence classes of some equivalence relation on $\states$ and
\begin{eqnarray*}
 \trans^{\forall\exists} &= & \{ (s, \sigma_I, s') \mid \sigma_I \in \ai
\text{ and } \forall q \in s, \exists q' \in s': (q, \sigma_I, q') \in
\trans \} \cup \\
 & & \{ (s, \sigma_O, s') \mid \sigma_O \in \ao \text{ and } \exists q
\in s, \exists q' \in s': (q, \sigma_O, q') \in \trans \}
\end{eqnarray*}
Similarly we define the $\exists\forall$ abstraction \emph{BIA} with the
transition relation defined as follows:
\begin{eqnarray*}
 \trans^{\exists\forall} &= & \{ (s, \sigma_I, s') \mid \sigma_I \in \ai
\text{ and } \exists q \in s, \exists q' \in s': (q, \sigma_I, q') \in
\trans \} \cup \\
 & & \{ (s, \sigma_O, s') \mid \sigma_O \in \ao \text{ and } \forall q
\in s, \exists q' \in s': (q, \sigma_O, q') \in \trans \}
\end{eqnarray*}

\begin{theorem}
Let $f$ be one of the objectives in $\{$LimAvg,$\Disc_{\lambda}\}$ and $F, G$ be arbitrary {BIAs} with $\model_O, \model_I$ as error models,
then the following inequalities hold:
$$ d^f(F^{\forall\exists} \otimes \model_O,G^{\exists\forall} \otimes
\model_I) \leq d(F \otimes \model_O,G \otimes \model_I) \leq
d^f(F^{\exists\forall} \otimes \model_O,G^{\forall\exists} \otimes
\model_I) $$
\end{theorem}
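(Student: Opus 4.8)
The plan is to establish the two inequalities separately, each by exhibiting an appropriate simulation relation and using it to transport a winning (optimal) Player~2 strategy from one game to the other, exactly as in the proof of Theorem~\ref{thm:quantitative_case}. The governing observation is that the $\forall\exists$ and $\exists\forall$ abstractions are designed precisely so that the abstract automaton alternating-refines (or is refined by) the concrete one. Concretely, I would first prove two auxiliary refinement facts: (i) $F^{\forall\exists} \asim F$, i.e.\ $F^{\forall\exists} \otimes \model_O$ alternating-simulates $F \otimes \model_O$ via the relation $\asim$ defined by $[q] \asim q$ (each state related to its own equivalence class); and (ii) $G \asim G^{\exists\forall}$, via the symmetric relation. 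The $\forall\exists$ transition relation keeps inputs universally (so the class $[q]$ accepts at least the inputs that all members accept, guaranteeing the simulation of inputs in the required direction) and outputs existentially (so the class offers at most the outputs that some member offers); this is exactly the pattern needed to match clauses~1 and~2 of the alternating simulation definition.

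For the left inequality $d^f(F^{\forall\exists} \otimes \model_O, G^{\exists\forall} \otimes \model_I) \leq d^f(F \otimes \model_O, G \otimes \model_I)$, I would take an optimal positional Player~2 strategy $\strab_*$ in the concrete game $\game_{F \otimes \model_O, G \otimes \model_I}$ and lift it to a Player~2 strategy $\strab$ in the abstract game $\game_{F^{\forall\exists} \otimes \model_O, G^{\exists\forall} \otimes \model_I}$. Using the refinements (i) and (ii), whenever Player~1 picks an input transition of $F^{\forall\exists}$ or an output transition of $G^{\exists\forall}$, the definition of the abstract transition relation guarantees a corresponding concrete transition, so $\strab_*$ can respond and the response can be projected back to the abstract game. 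The key point, as in Theorem~\ref{thm:quantitative_case}, is that the weights encountered along the abstract play $\rho$ coincide with the weights along the simulating concrete play $\rho'$ (the error models $\model_O, \model_I$ act identically since the abstraction does not alter the action alphabets), so $f(\rho) \leq f(\rho') \leq \val(\strab_*)$, giving the bound on the value of the abstract game. The right inequality is entirely symmetric, swapping the roles of the two abstractions: there one uses $F \asim F^{\exists\forall}$ and $G^{\forall\exists} \asim G$ and lifts an optimal strategy from the $(F^{\exists\forall}, G^{\forall\exists})$ game into the concrete game.

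The main obstacle I anticipate is the direction-reversal inherent in alternating simulation: Player~1 chooses inputs from the \emph{left} automaton but outputs from the \emph{right} automaton, and Player~2 must match in the opposite components. This means the $\forall\exists$ abstraction must simultaneously over-approximate the left automaton's input behavior (universally, so Player~1 cannot gain by playing an abstract input that no concrete state supports) and under-approximate its output behavior (existentially), and the $\exists\forall$ abstraction must do the dual on the right. Verifying that these two quantifier choices line up coherently with the two-sided matching discipline of the alternating simulation game — and in particular that the projected concrete play remains conformant to $\strab_*$ and reaches $s_{\mathsf{err}}$ no sooner than the abstract play — is the delicate bookkeeping step. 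I would handle it by checking clauses~1 and~2 of the alternating simulation definition against the $\forall\exists$/$\exists\forall$ transition relations one case at a time, much as the five-case analysis was organized in the composition proof.
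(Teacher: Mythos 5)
Your proposal is correct and follows essentially the same route as the paper: the left inequality is proved by transporting an optimal Player~2 strategy from the concrete game $\game_{F \otimes \model_O, G \otimes \model_I}$ into the abstract game, using the universal quantifier on inputs of $F^{\forall\exists}$ (resp.\ outputs of $G^{\exists\forall}$) to project Player~1's abstract moves onto concrete ones, and the existential quantifiers to lift Player~2's concrete responses back, with the abstract weights bounded by the concrete ones. The only divergences are minor: you package the quantifier bookkeeping into two explicit refinement lemmas ($F^{\forall\exists} \asim F$ and $G \asim G^{\exists\forall}$) where the paper picks witness concrete states inline, and for the right inequality you transport the optimal Player~2 strategy of the $(F^{\exists\forall}, G^{\forall\exists})$ game into the concrete game, whereas the paper transports the optimal Player~1 strategy --- mirror-image arguments that are interchangeable by the determinacy theorem stated in the paper.
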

\begin{proof}
Let $\strab$ be the optimal positional Player~2 strategy in the game $\game_{F \otimes \model_O,G \otimes \model_I}$ we construct a positional Player~2 strategy $\strab_*$ in $\game_{F^{\forall\exists} \otimes \model_O,G^{\exists\forall} \otimes \model_I}$ that is going to ensure the needed value. When defining the strategy, we need to distinguish between two cases:

\textbf{Input actions} Let the state be $(s_F, \sigma_I, s_G)$ for some $\sigma_I \in \ai$. We pick a state $q_F \in s_F$ and $q_G \in s_G$ such that strategy $\strab$ can ensure the value from the state $(q_F, \sigma_I, q_G)$. Let $\strab$ reach state $(q_F, \#, q'_G)$ by playing action $\sigma_I$. Then $\strab_*$ plays action $\sigma_I$ to a state $(s_F, \#, [q'_G])$.

\textbf{Output actions}
Similarly as in the previous case let the state be $(s_F, \sigma_O, s_G)$ for some $\sigma_O \in \ao$. We pick a state $q_F \in s_F$ and $q_G \in s_G$ such that strategy $\strab$ ensures the value from the state $(q_F, \sigma_O, q_G)$. If the state reached by $\strab$ is $(q'_F,\#, q_G)$ then $\strab_*$ reaches a state $([q'_F],\#,q_G)$.

From every play conformant to $\strab_*$ we can extract a play
conformant to $\strab$ such that their values are equal. This concludes
the first inequality. The proof of the second inequality is similar, but
considers the optimal Player~1 strategy.
\end{proof}
\vspace*{-1.0em}
\section{Case Studies}
\vspace*{-0.5em}

We present two case studies illustrating the interface simulation
distances framework. In the first one, we describe a message transfer
protocol for sending messages over an unreliable medium. This case
study also serves to illustrate the behavior of the distance under
interface composition. The second case study is on error correcting
codes. In both cases, we use the limit average objective. 

\comments{
In this section we show an example of a message transfer protocol and
and a case study including error correcting codes. In the first example
we model a message transfer protocol that sends a message over an
unreliable medium. By analyzing 2 different interfaces and measuring
their distances to the specification we show that the interface that
is working ''more" as prescribed is closer according to the distance
introduced in the previous sections.

In the next part we introduce error correcting codes that differ in the
number of bit flips that can be corrected after a noisy
transmission. When comparing the distances to the specification (it
outputs the input no matter how many bit flips occurred during the
transmission) we show that the more bit flips the code can correct the
closer it is to the specification. In both cases we are considering the
quantitative LimAvg objective.
}

\vspace{-0.7em}
\subsection{Message Transmission Protocol}
\vspace{-0.3em}

\begin{wrapfigure}[8]{l}{4cm}
   \resizebox{4cm}{!}{\includegraphics{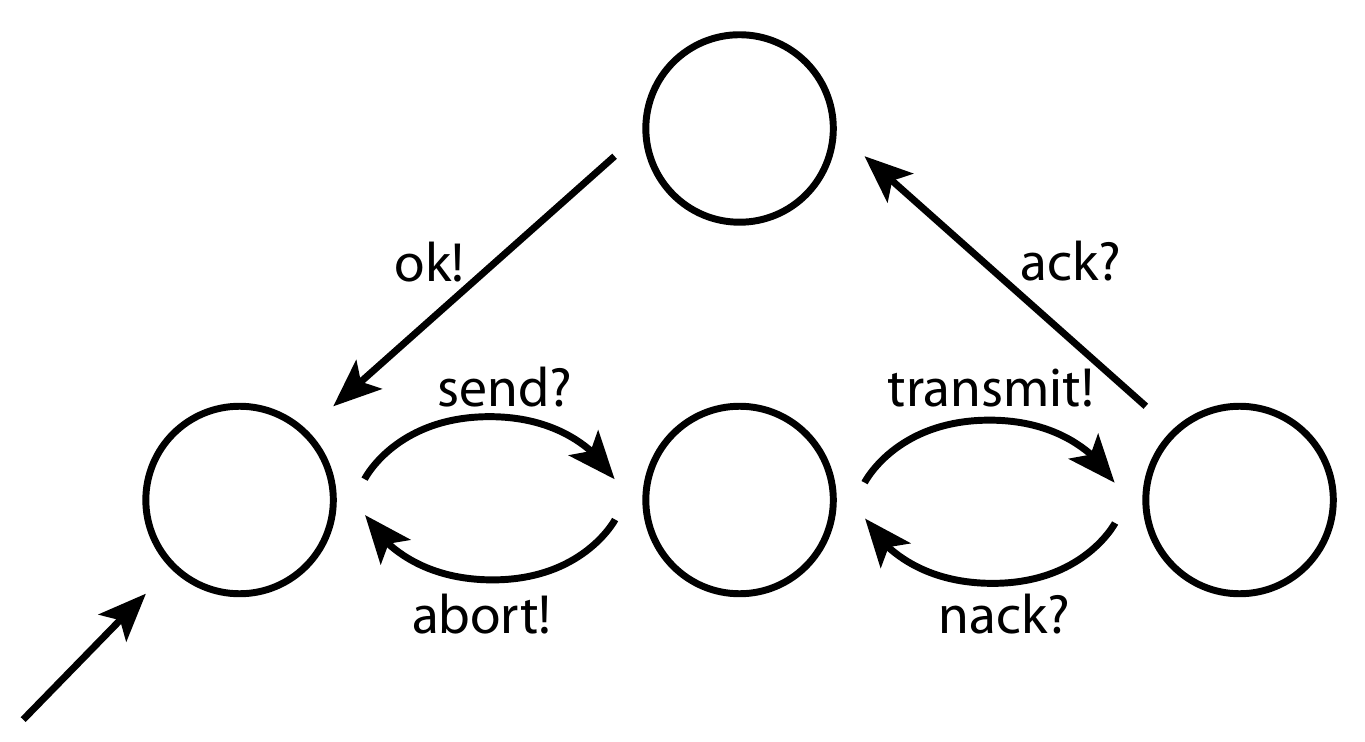}}
    \caption{\emph{BIA} \emph{Send}}
    \label{fig:spec}
\end{wrapfigure}
Consider a \emph{BIA} \emph{Send} in Figure~\ref{fig:spec}. It
receives a message via the input \emph{send?}. It then tries to send
this message over an unreliable medium using the output
\emph{transmit!}. In response to \emph{transmit?}, it can receive an
input {\em ack?} signifying successful delivery of the message, or an
input {\em nack?} signifying failure. It can then try to {\em
  transmit!} again (unboundedly many times), or it can abort using
the output {\em abort!}. {\em Send} will be our specification
interface. 

%It can try to send a message that it receives as an input via a \emph{send?}
%action using a \emph{transmit!} output action until it succeeds and
%confirms this fact by sending an output 
%message \emph{ok!}. Or it can abort trying to send the message and
%outputs an \emph{abort!} action. 

\begin{figure}[b]
  \begin{minipage}[b]{0.4\linewidth}
    \centering
    \resizebox{4cm}{!}{\includegraphics{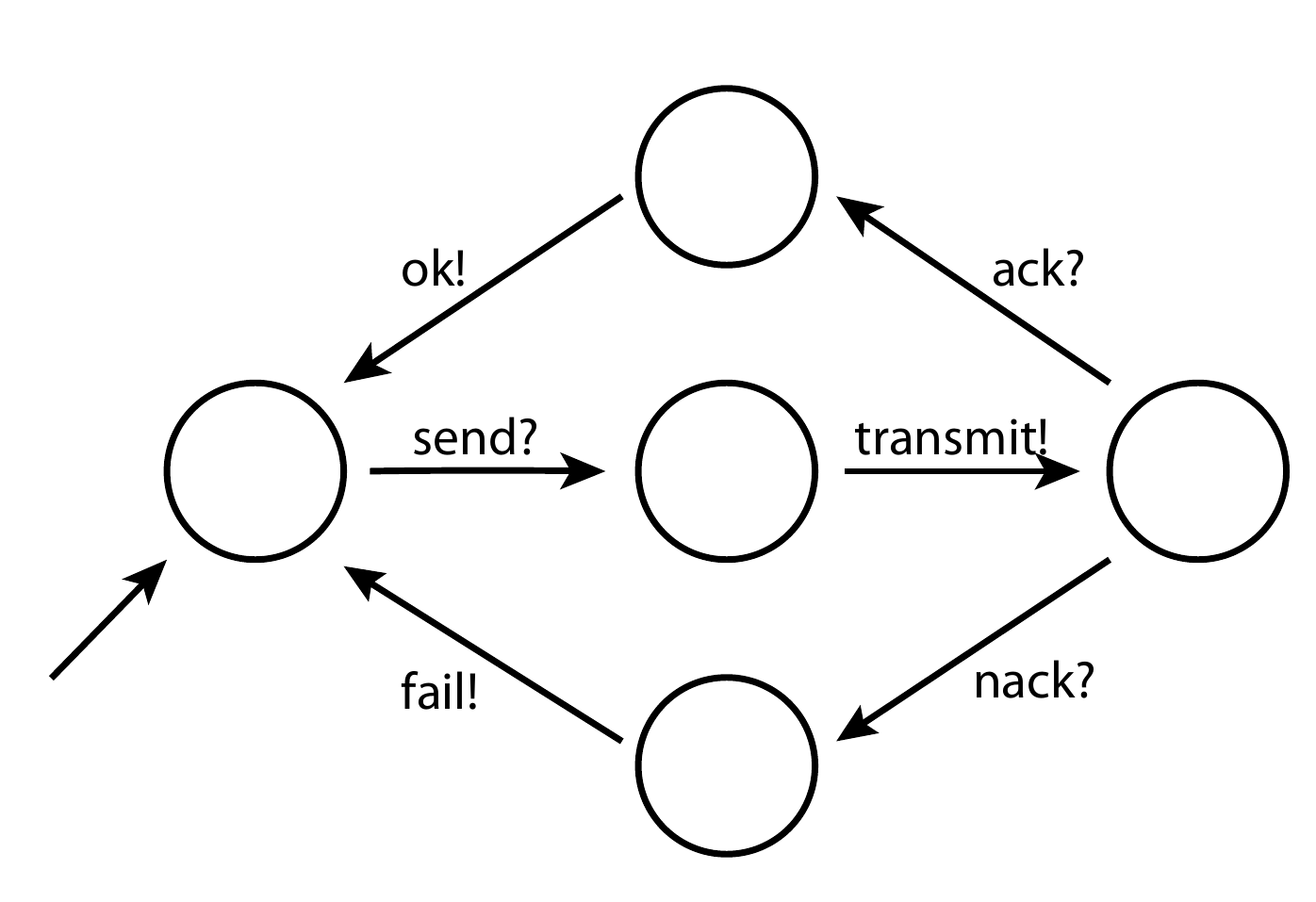}}
    \caption{Implementation \emph{SendOnce}}
    \label{fig:sendonce}
  \end{minipage}
  \begin{minipage}[b]{0.6\linewidth} % a~minipage that covers half the~page
   \centering
   \resizebox{7cm}{!}{\includegraphics{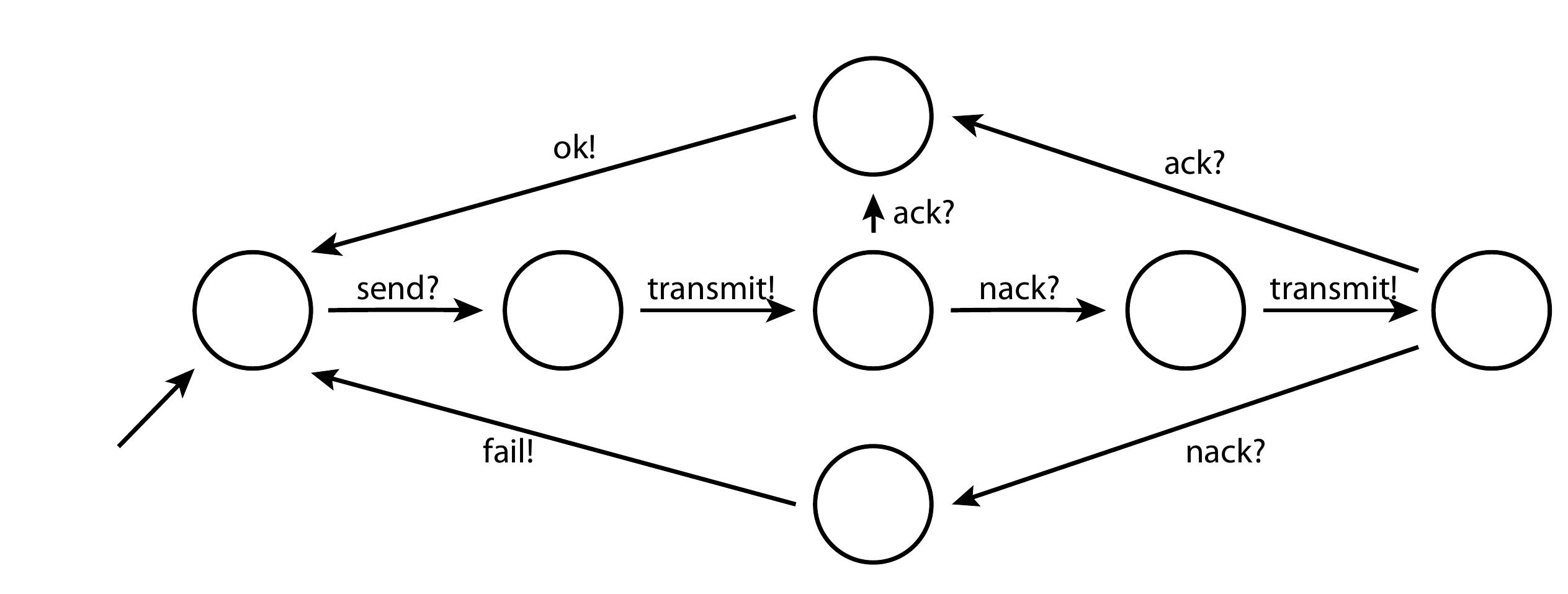}}
    \caption{Implementation \emph{SendTwice}}
    \label{fig:sendtwice}
  \end{minipage}
\end{figure}

We consider two implementation interfaces \emph{SendOnce} and
\emph{SendTwice} (Figures~\ref{fig:sendonce} and \ref{fig:sendtwice}).  
\emph{SendOnce} tries to send the message
only once and if it does not 
succeed it reports a failure by sending \emph{fail!} output. The second
implementation \emph{SendTwice} tries to send the message twice and
it reports a failure only if the transmission fails two times in a row. 
These implementation interfaces thus differ from the specification
interface which can try to transmit the message an unbounded number of
times. In particular, \emph{SendOnce} or 
\emph{SendTwice} do not refine the specification \emph{Send} in the
classical boolean sense.  

In order to compute distances between {\em Send} on one hand, and {\em
  SendOnce} and {\em SendTwice}, we first define an error model. The
output error model $\model_O$ we consider  
allows to play an output action \emph{fail!} instead of \emph{abort!}
with penalty $1$. We construct two games: $\game_{Send \otimes
  \model_O, SendOnce}$ and  
$\game_{Send \otimes \model_O, SendTwice}$.
%In both games Player~1 cannot force the game to reach the $s_{err}$
%state, hence 
The goal of Player~1 is to make Player~2 cheat by playing {\em abort!} 
as often as possible. Therefore, whenever
Player~1 has a choice between \emph{ack?} and \emph{nack?} the optimal
strategy is going to play \emph{nack?}. This agrees with the
intuition that the difference between {\em Send} and {\em SendOnce}
({\em SendTwice}) is manifested in the case when the transmission fails. 

%The optimal strategy of Player~1 plays the action
%%\emph{fail!} as often as possible to make Player~2 cheat, 
%i.e., whenever
%there is a choice between \emph{ack} and \emph{nack} the optimal
%strategy is going to play \emph{nack}.

The resulting distances are $d(Send \otimes \model_O, SendOnce) =
\frac{1}{4}$ and $d(Send \otimes \model_O, SendTwice) = \frac{1}{6}$.
According to the computed distances 
\emph{SendTwice} is closer to the specification than \emph{SendOnce},
as it tries to send the message before 
reporting a failure more times. 

In order to illustrate the behavior of the interface simulation
distance under composition of interfaces, we compose the interfaces
{\em Send}, {\em SendOnce}, and {\em SendTwice} with an interface
modeling the unreliable medium. The interface \emph{Medium} in
Figure~\ref{fig:wire} models an interface that 
fails to send a message at most two times in a row. The resulting
systems
\emph{Send $\parallel$ Medium}, \emph{SendOnce $\parallel$ Medium} and
\emph{SendTwice $\parallel$ Medium}  can be seen on
Figure~\ref{fig:send_wire}, \ref{fig:sendonce_wire} and
\ref{fig:sendtwice_wire}.
\begin{wrapfigure}[6]{l}{4.5cm}
   \centering
   \resizebox{4.5cm}{!}{\includegraphics{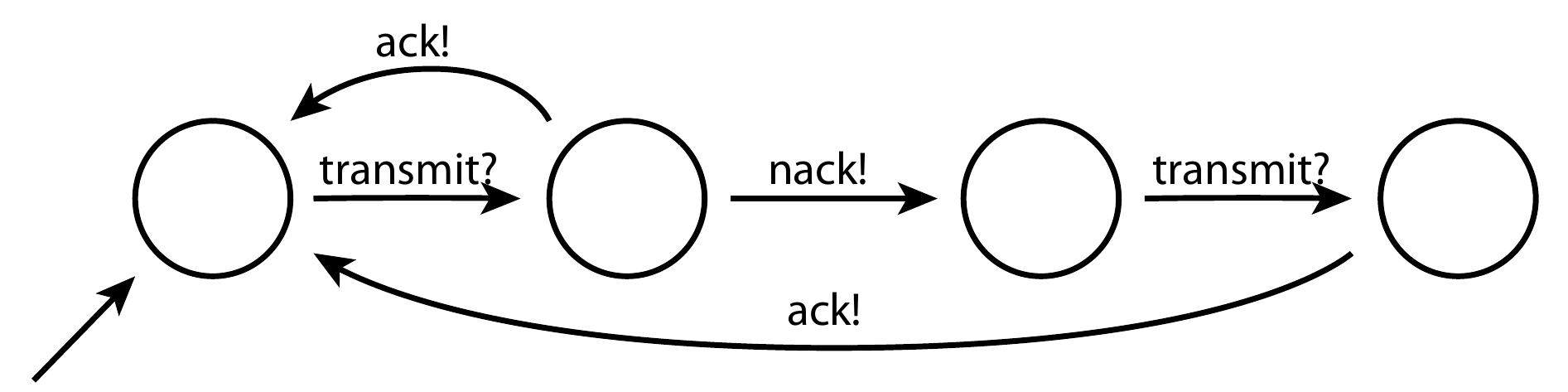}}
    \caption{The \emph{Medium}}
    \label{fig:wire}
\end{wrapfigure}
\noindent Again we can construct two games and compute their
values. We obtain:
%\begin{compactitem}
%\item 
$d((Send \parallel Medium) \otimes \model_O, SendOnce \parallel
    Medium) = \frac{1}{8}$, and 
$d((Send \parallel Medium) \otimes \model_O,
    SendTwice \parallel Medium) = 0$ . 
As expected, when the \emph{Medium} cannot fail two times in a row the
implementation \emph{SendTwice} is as good as the specification and
therefore the distance would be $0$.  We remark that if we would
change the model of the \emph{Medium} to the one that 
never fails, both the distances would be $0$.

\begin{figure}[h]
   \begin{minipage}[b]{0.6\linewidth}
   \centering
   \resizebox{7cm}{!}{\includegraphics{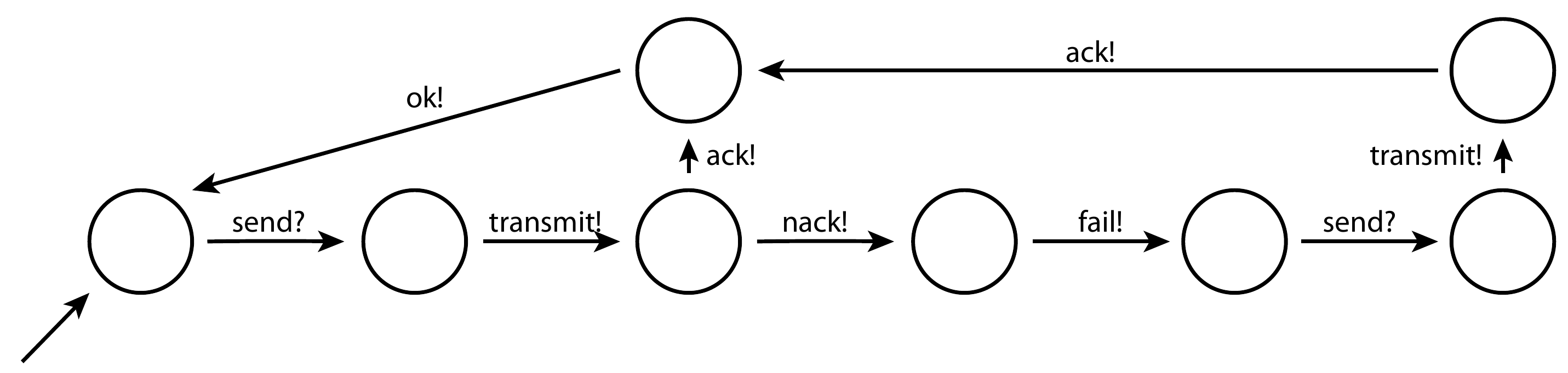}}
    \caption{The \emph{SendOnce $\parallel$ Medium}}
    \label{fig:sendonce_wire}
  \end{minipage}
   \begin{minipage}[b]{0.4\linewidth}
   \centering
   \resizebox{4.5cm}{!}{\includegraphics{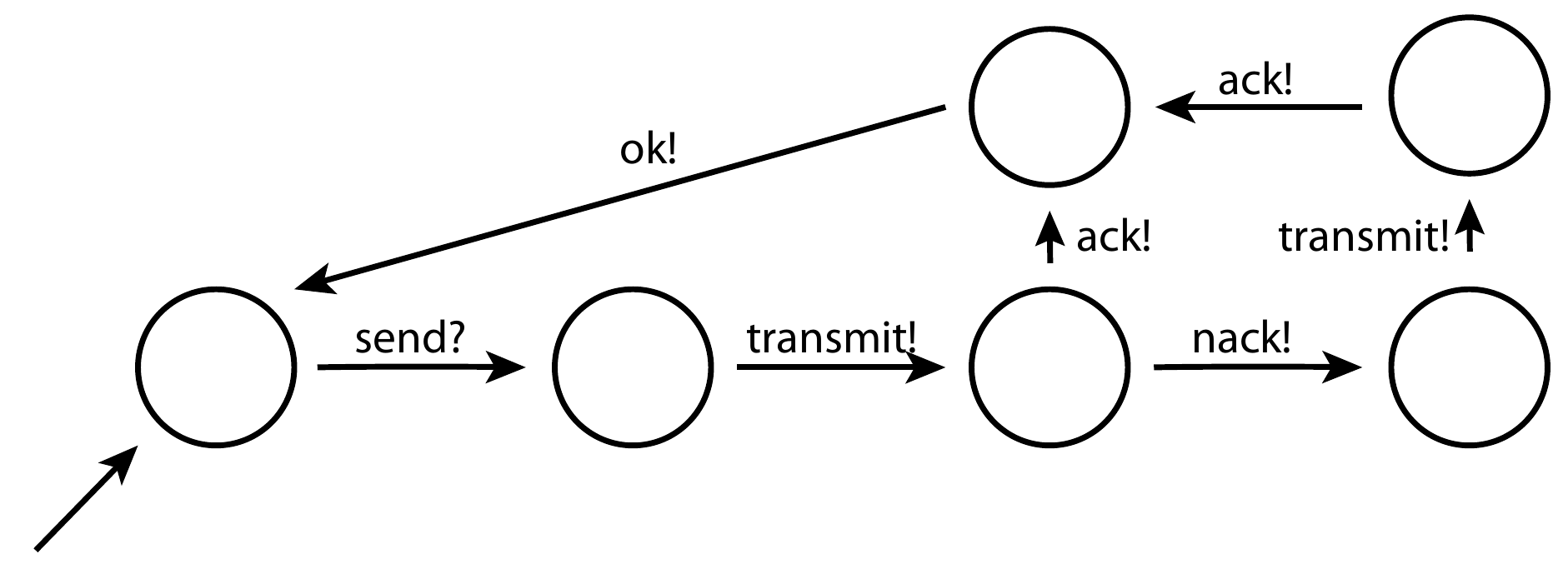}}
   \caption{\emph{SendTwice $\parallel$ Medium}}
   \label{fig:sendtwice_wire}
  \end{minipage}
\end{figure}

\vspace{-1.3em}
\subsection{Error Correcting Codes}
\vspace{-0.3em}
\begin{wrapfigure}[10]{}{6cm}
   \resizebox{6cm}{!}{\includegraphics{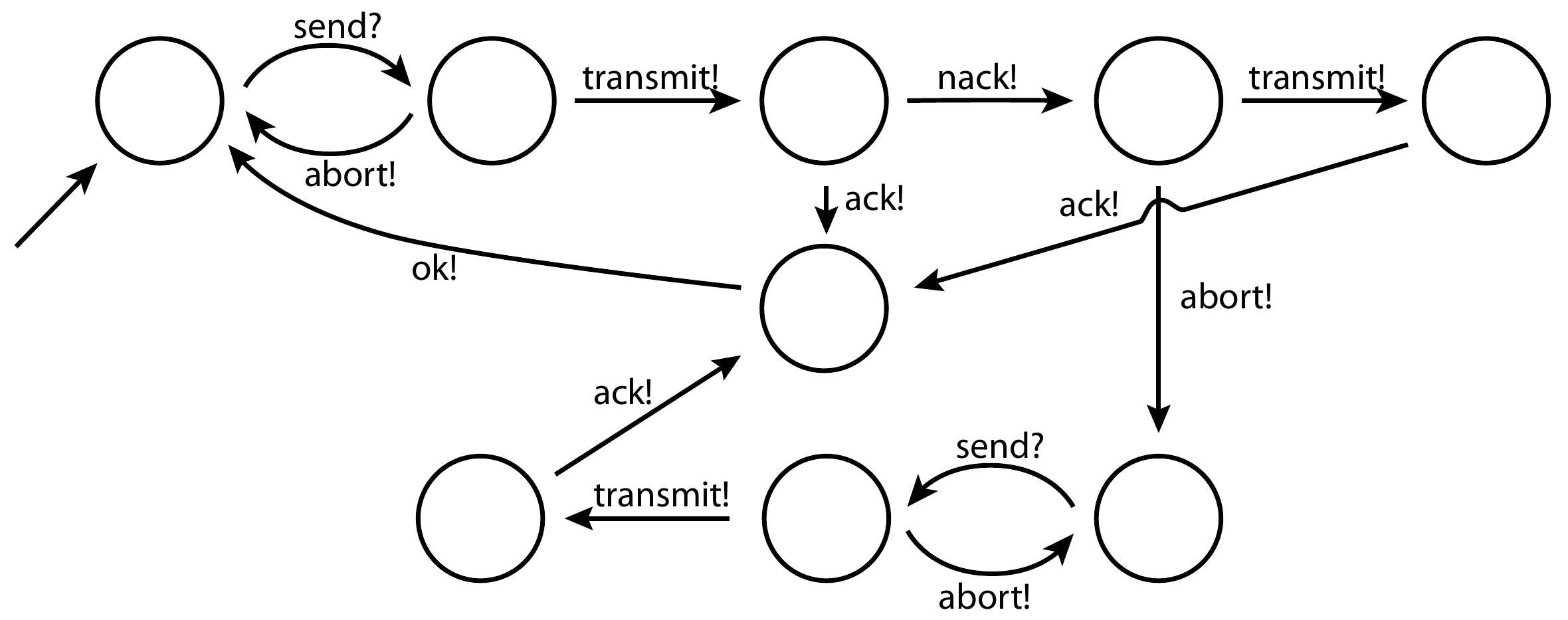}}
    \caption{\emph{Send $\parallel$ Medium}}
    \label{fig:send_wire}
\end{wrapfigure}
Error correcting codes are a way to ensure reliable information
transfer through a noisy environment. An error correcting code is
for our purposes a function that assigns every 
binary input string a fixed length {\em codeword} -- again a binary string ---
that is afterwards transmitted. 
A natural way how to improve
the chances of a correct transfer is to encode each message into a
codeword by adding redundant bits. These codewords might get
corrupted during the transmission, but the redundancy will cause that
codewords are not close to each other (according to Hamming distance),
and therefore it is possible to detect erroneous transfer, and 
sometimes even to correct some of the errors.  
Note that in what follows, we consider a situation where the only
type of error allowed during the transmission is a bit flip. 

We consider the well-known standard {\em $(n,M,d)$-code}, where $n$ is
the length of 
the code words, $M$ is the number of different original messages, and
$d$ is the minimal Hamming distance between codewords. 
For instance, if we are given an error correcting code such that the minimal
distance between codewords is 3 (i.e. an $(n,M,3)$-code for some $n$ and
$M$), then whenever a single bit flip 
occurs we receive a string that is not among the codewords. However,
there exists a unique codeword such that it has the minimal distance
to the received string. The received string can be then corrected to
this codeword.  

We consider two different error correcting codes
$C_1$ and $C_2$. Both codes encode
$2$ bit strings into $5$ bit codewords. The codes are given in the
following table: 

\begin{tabular*}{10in}{ccccc}
$C_1(00) = 00000 \:$ & $C_1(01) = 00101\:$ & \hspace{1cm} & $C_2(00) =
00000\:$ & $C_2(01) = 01101\:$ \\ 
$C_1(10) = 10110 \:$ & $C_1(11) = 11011\:$ &  & $C_2(10) = 10110\:$ &
$C_2(11) = 11011\:$ \\ 
\end{tabular*}

Note that $C_1$ is a $(5,4,2)$ code, i.e., its codewords have length
$5$, it encodes $4$ words and the minimal Hamming distance between two
codewords is $2$. The minimal distance $2$ ensures that when decoding the
codeword a single bit flip can be detected, however, not corrected. On
the other hand the code $C_2$ is a $(5,4,3)$ code and therefore can
detect $2$ bit flips and correct a single bit flip. 

\begin{wrapfigure}[20]{r}{5cm}
\vspace*{-1cm}
   \resizebox{5cm}{!}{\includegraphics{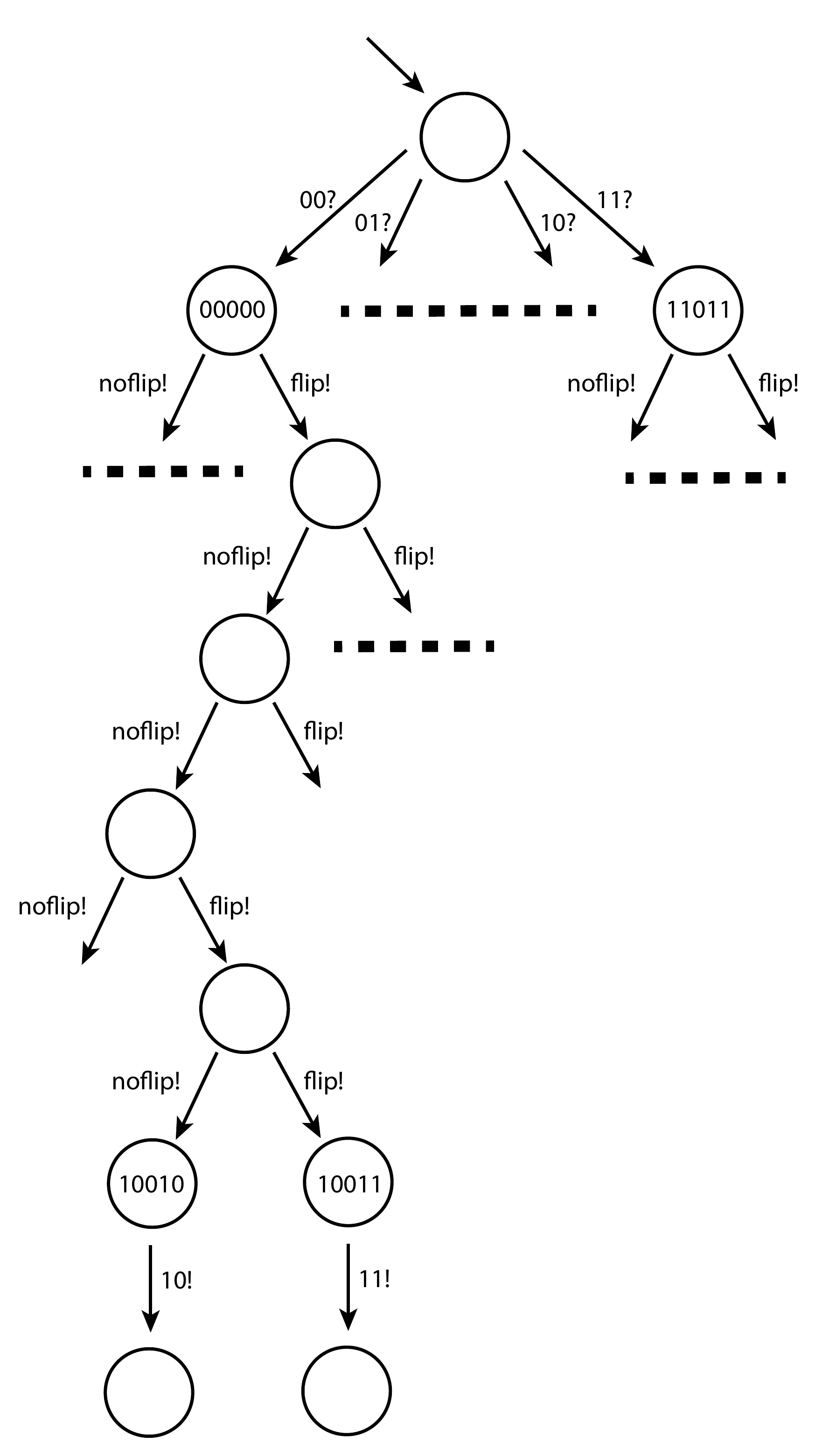}}
   \caption{Code $C_2$}
   \label{fig:error_code}
\vspace*{-0.5cm}
\end{wrapfigure}

We model as BIAs  the codes $C_1$ and $C_2$ and their transmission over
a network where bit flips can occur.  
We construct the \emph{BIAs} $F_{C_1}$ and $F_{C_2}$ 
according to the scheme presented in Figure~\ref{fig:error_code} (this
scheme is inside a loop and thus occurs repeatedly in both the BIAs). The
first action is the input of a two-bit word that should be
transmitted. The input word is then encoded according to $C_1$ (in
$F_{C_1}$), or $C_2$ (in $F_{C_2}$). 
Then a sequence of five actions \emph{flip} (or \emph{noflip})
determines whether a bit flip occurs on the corresponding
position. Depending on the \emph{flip}/\emph{noflip} sequence received
and the error correcting 
code used, the final output is the decoding of the received string,
with possibly some of the corrupted bits detected or repaired. More
precisely, on an input $x$, 
{$F_{C_1}$} can detect a single bit flip, and could in this case 
  send an \emph{error} output. In case of more
  flips, it can even output a symbol different from the input $x$.
Similarly, on an input $x$, {$F_{C_2}$}, in case of a single bit flip, can
  detect and correct the bit flip, and output the the message $x$. If there
  are multiple flips it outputs a string different from the input
  $x$. 
As a specification interface, we consider a BIA $F_{Spec}$ that uses
the schema from Figure~\ref{fig:error_code}, but always outputs its
input message, no matter what sequence of actions
\emph{flip} or \emph{noflip} it receives. 

We compose all three automata with a \emph{BIA} $F_{Error}$ modeling
the allowed number of bit flips. Let $F_{Error}$ allow only a single
bit flip in $5$ bits. The output error model $\model_O$ allows the
Player~2 to play all the output $2$ bit strings together with the
\emph{error} actions interchangeably. Then the corresponding values of
the games are as follows: 
(a) $d((F_{C_{Spec}} \parallel F_{Error}) \times \model_O,
  F_{C_{1}} \parallel F_{Error}) = 0 $, and (b)
  $d((F_{C_{Spec}} \parallel F_{Error}) \times \model_O,
  F_{C_{2}} \parallel F_{Error}) = \frac{1}{7} $.
This shows the that the code $C_2$ performs better than the code
$C_1$, as it can 
not only detect bit flips, but can also correct a single bit flip. In
case we would use a $F_{Error}$ that could do multiple bit flips in
$5$ bits, then distances of both codes would be the same.

\vspace{-1.0em}
\section{Conclusion}
\vspace{-0.3em}

\mypara{Summary.}
This paper extends the quantitative notion of simulation
distances~\cite{CHR10a} to automata with inputs and outputs. 
This distance relaxes the boolean notion of refinement and allows us to
measure the ``desirability'' of an interface with respect to a
specification, or select the best fitting interface from several
choices that do not refine a specification interface in the usual
boolean sense. We show that the interface simulation distance is a
directed metric, i.e., it satisfies reflexivity and the triangle
inequality. Moreover, the distance can only decrease when the
interface are composed with a third interface, which allows us to
decompose the specification into simpler parts. Furthermore, we show
that the distance can be bounded from above and below by considering
abstractions of the interfaces.  

\mypara{Future work.} 
%There are several possible directions for future research.
%First, as we mentioned,
Defining the interface simulation distance for broadcast
interface automata is one particular instance of a more general
idea. We plan to examine the properties of the
distance on other types of I/O automata, with
differing notions of composition, with internal actions, or timed
automata and automata modeling resource usage. Second, we plan
to investigate probabilistic versions of the simulation distances,
which would be 
useful in cases where there is a probability distribution on possible
environment inputs. Third, we plan to perform larger case studies to
establish which error models and accumulating functions (LimAvg,
$\Disc_{\lambda}$, etc.) are most useful in practice. 
\vspace{-0.5em}
\bibliographystyle{eptcs}
\bibliography{cite}

\end{document}